\newtheorem{theorem}{Theorem}
\newtheorem{corollary}[theorem]{Corollary}
\newtheorem{lemma}[theorem]{Lemma}
\newtheorem{definition}[theorem]{Definition}
\newcommand{\ketbra}[2]{|#1\rangle\!\langle #2|}
\title{Quantum and classical algorithms for nonlinear unitary dynamics}
\author{Noah Br\"ustle}
\affiliation{Department of Computer Science, University of Toronto, Toronto, Canada}
\author{Nathan Wiebe}
\affiliation{Department of Computer Science, University of Toronto, Toronto, Canada}
\affiliation{Pacific Northwest National Laboratory, Richland, USA}
\affiliation{Canadian Institute for Advanced Research, Toronto, Canada}
\date{}
\newglossaryentry{ket}{type=symbols,name={\ensuremath{\ket{\psi}}},
description={A (vertical) vector in the numbers. This may represent a pure quantum state}}
\newglossaryentry{bra}{type=symbols,name={\ensuremath{F^{n^\ast}}},
description={$n$-fold convolution of the distribution function/distribution $F$}}
\begin{document}

\maketitle
\begin{abstract}
    Quantum algorithms for Hamiltonian simulation and linear differential equations more generally have provided promising exponential speed-ups over classical computers on a set of problems with high real-world interest. However, extending this to a nonlinear problem has proven challenging, with exponential lower bounds having been demonstrated for the time scaling. We provide a quantum algorithm matching these bounds.  Specifically, we find that for a non-linear differential equation of the form $\frac{d\ket{u}}{dt} = A\ket{u} + B\ket{u}^{\otimes2}$ for evolution of time $T$, error tolerance $\epsilon$ and $c$ dependent on the strength of the nonlinearity, the number of queries to the differential operators   
    that approaches the scaling of the quantum lower bound of $e^{o(T\|B\|)}$ queries in the limit of strong non-linearity. Finally, we introduce a classical algorithm based on the Euler method allowing comparably scaling to the quantum algorithm in a restricted case, as well as a randomized classical algorithm based on path integration that acts as a true analogue to the quantum algorithm in that it scales comparably to the quantum algorithm in tailored cases where sign problems are absent.

\end{abstract}

\printglossaries

\section{Introduction}

The Hamiltonian Simulation Problem seeks to simulate quantum interactions with the use of a quantum computer; as was the first purpose proposed for the device when conceived by Richard Feynman, in 1982 \cite{1982IJTP...21..467F}). This is achieved by solving the following differential equation:

\begin{equation}
    i\frac{d\ket{u}}{dt} = H\ket{u},
\end{equation}
as defined by the Schrodinger equation describing the evolution of a quantum system; where $\ket{u}$ is a quantum state and $H$ is a Hermitian matrix. This may be analytically resolved to $\ket{u(t)} = e^{-iHt}\ket{u}$. The numerical approximation of such $\ket{u(t)}$ is however believed to be exponentially hard for any classical algorithm in its generality~\cite{aaronson2011computational}. The problem is, in fact, BQP-complete; meaning that any problem that a quantum computer can solve, with probability at least $\frac{2}{3}$, in time polynomial in the number of bits of input may be reduced to the Hamiltonian Simulation Problem. The existence of polynomial-time quantum algorithms for the Hamiltonian Simulation Problem is therefore of particular interest; further optimizations and improvements of the algorithm remain an active field of research \cite{Berry_2006} \cite{Low2019hamiltonian} \cite{PhysRevLett.118.010501} .

It is then natural to ask whether more general differential equations permit a quantum algorithm with an exponential asymptotic advantage over an adequately comparable classical algorithm. The problem of linear differential equations, removing the condition of our evolution remaining unitary, a.k.a:

\begin{equation}
        \frac{d\ket{u}}{dt} = A\ket{u},
\end{equation}
for some complex matrix $A$, has been studied extensively \cite{Berry_2017}\cite{Xin}. In a paper by Berry \cite{Berry_2014}, the problem was reduced to the quantum algorithm for solving linear systems of equations by Harrow, Hassidim and Lloyd \cite{Harrow_2009}, providing us with the first algorithm with exponential improvement over known classical algorithms. Subsequent works expanded the context for which quantum differential equations solvers could be applied; \cite{Krovi_2023} removed the restriction on the diagonalizability of the matrix, while \cite{Berry_2024} provided an algorithm capable of solving time-dependent equations. 

Removing the condition of linearity has, however, proven to be more difficult. Previous papers \cite{Joseph_2020} \cite{dodin2021applications} provided us with heuristic quantum algorithms tackling the problem, but these papers fall short of providing a proper implementation or a rigorous analysis of the running time of the algorithms. A first successful attempt at providing a concrete algorithm to tackle the Quantum Nonlinear Differential Equations problem in a restricted case was made by Liu et al. in 2021 \cite{Liu_2021}, using the Carleman Linearization technique that will be employed in this paper. However, the restrictions provided by the paper are bound to dissipative systems (a.k.a. with a loss of energy); this would exclude the initial Hamiltonian Simulation problem from our formalism. A subset of non-linear equations with no dissipative factor is further studied in \cite{wu2024quantumalgorithmsnonlineardynamics}, equally using Carleman Linearization. A different linearization technique, the Koopman-von Neumann linearization, was employed in \cite{tanaka2025polynomialtimequantumalgorithm} to solve a restricted class of norm-preserving nonlinear equations. 

A natural nonlinear extension of the Hamiltonian Simulation problem is the study of nonlinear differential equation with a unitary solution. Indeed, nonlinear differential equations with unitary solutions describe several physical phenomena \cite{nonlinschro}; the Gross-Pitaevskii equation, for example, is a nonlinear differential equation that may be used to model quantum systems of identical bosons. \cite{1961NCim...20..454G}. Nonetheless; it is not clear whether we may gain any substantial advantage by using quantum computing for these equations, relative to a classical model. Quantum computing is an inherently linear theory; the operations we may apply to our quantum states are linear maps. Deterministic classical numerical methods with better stability for nonlinear equations generally make use at multiple copies of the state at each iteration (example: higher-order Runge-Kutta method), which translates poorly to Quantum computing due to the no-cloning theorem.

We will first briefly state our results in Section \ref{sec:results}. Then, Section \ref{sec:nogo} consists of a proof of our no-go result that shows that sub-exponential scaling with time would allow us to violate state discrimination lower bounds. In
Section \ref{sec:algo}, we will discuss potential algorithmic solutions approaching these bounds.
Subsection \ref{sec:linearization} describes a linearization of equation \ref{equ:problem}. Subsection \ref{sec:quantum}will make use of our linearization to construct a quantum algorithm. Finally, in subsection \ref{sec:classical} we will provide a set of potential avenues to dequantize the previous algorithm.

\section{Results}
\label{sec:results}
The fact that quantum mechanics is a linear theory provides a major obstacle in the path of finding fast
quantum algorithms for solving non-linear differential equations. 
The need to generalize of the equations can be seen from the results of Lloyd et al. In \cite{2020arXiv201106571L}, the following set of equations is proposed,

\begin{equation}
   \frac{d\ket{u}}{dt} = f(\ket{u})\ket{u}
\end{equation}
where $f(u)$ is a $d \times d$ anti-hermitian matrix that is an order $m$ polynomial of $\ket{u}$ and $\bra{u}$. While \cite{2020arXiv201106571L} claims that a quantum algorithm polynomial in the evolution time $T$ is possible for this context - the arguments presented are not fully rigorous, and fail to abide by the lower bounds that may be established on the simulation of these equations.

Due to the fact that we require non-linear
operations in the above differential equation, we utilize tensor operators $(\cdot)^{\otimes p} : \mathbb{C}^{D} \mapsto \mathbb{C}^{Dp}$. These allow us to copy a state multiple times and express the evolution as an arbitrary polynomial of the elements of our state $\ket{u}$.  We use this notation to be able to express the differential equation in the form of rectangular matrices $R_1,R_2,\ldots R_k$,

\begin{equation}
\label{equ:problem}
    \frac{du}{dt} = R_1u + R_2u^{\otimes2}+... + R_ku^{\otimes k}
\end{equation}
such that the transformation defined by the equation describes a unitary evolution on the vector $u$. We note here that certain operations on complex vectors (such as the absolute values used in the Gross-Pitaevskii equation) may be represented as polynomials after representing our complex numbers as pairs of reals.  For example, the action of $i$ on the complex number $(a+ib)$ can be expressed as 
\begin{equation}
    \begin{bmatrix}
    a \\ b
\end{bmatrix} \mapsto \begin{bmatrix}
    0 & 1 \\
    -1 & 0
\end{bmatrix}\begin{bmatrix}
    a \\ b
\end{bmatrix}
\end{equation} 
and similarly complex conjugation takes the form of a Pauli-Z operation.

 Our first result, proved in Section~\ref{sec:nogo}, adapts the complexity-theoretic arguments from Childs and Young \cite{PhysRevA.93.022314} to the context of the equations that we are studying to obtain a lower bound on the query complexity of solving non-linear differential equations.

\begin{restatable}{lemma}{lemmanogo}
 \label{lemma:nogo2}
 Let $u: \mathbb{R} \mapsto \mathbb{C}^D$ for a positive integer $D$ and let 
 there exists a choice of matrices $R_1$ to $R_k$ be rectangular matrices such that $R_j \in {\mathbb C}^{D\times D^j}$ and a $d$-sparse time-dependent Hamiltonian $H:\mathbb{R} \mapsto \mathbb{C}^{D\times D}$ such that:
    
    $$i\frac{du}{dt} = H(t)u + R_1u + R_2u^{\otimes2}+... + R_ku^{\otimes k}$$ 
    describe a norm preserving evolution on $u$ such that for every $t$ there exists a unitary matrix $V$ such that $u(t) = V u(0) V^\dagger$% obeys $\|u(t)\|=1$ for all $t$
    . No algorithm can provide an approximate solution $\tilde{u}(t)$ such that $\|\tilde{u}(t)\tilde{u}^\dagger(t) - u(t) u(t)^\dagger\|_{\rm Tr}\le 2\epsilon$  for any constant $\epsilon$ with fewer than $e^{\Omega(t \times \sum_{j=1}^{k}\|R_j\|2^{-j/2}j)}$ queries to an oracle $O_u$ on $u(0)$.
\end{restatable}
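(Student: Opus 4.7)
The strategy is a reduction from quantum state discrimination, following Childs and Young~\cite{PhysRevA.93.022314}. The idea is that a sub-exponential algorithm for the non-linear unitary dynamics above would yield a state-discrimination procedure violating the known $\Omega(1/\|\ket{\psi_0}-\ket{\psi_1}\|)$ query lower bound for distinguishing two unknown pure states promised to be either $\ket{\psi_0}$ or $\ket{\psi_1}$ through a preparation oracle.

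First I would recall (or re-derive) the optimal state-discrimination bound: given oracle access to an unknown $\ket{\psi}\in\{\ket{\psi_0},\ket{\psi_1}\}$ with $|\langle\psi_0|\psi_1\rangle|=1-\delta$, any algorithm deciding which state was prepared with trace-distance advantage at least $1-2\epsilon$ requires $\Omega(1/\sqrt{\delta})$ queries. Next, the core construction: I would build a specific instance of the differential equation whose flow acts as an \emph{angle amplifier} on the unit sphere. Concretely, for each $j$ I would choose $R_j$ so that the non-linear term $R_j u^{\otimes j}$ contributes a gradient-like flow that pushes two initially close unit vectors apart, and I would pair it with a suitable time-dependent Hermitian $H(t)$ so that the combined evolution is norm-preserving on the orbit of both initial conditions (this is the restrictive unitary requirement that the lemma builds in). The natural choice is a Childs--Young-type amplifier of the form $R_j u^{\otimes j}\propto \|R_j\|\,\Pi^\perp_u \,(u^{\otimes j})$, whose linear growth rate on the small angular perturbation between $u_0$ and $u_1$ scales as $j\|R_j\|$ because $u^{\otimes j}$ has sensitivity $j$ to angular perturbations of $u$, and whose natural normalization against the unit-norm initial states yields a factor of $2^{-j/2}$ arising from the overlap $\langle u_0^{\otimes j}|u_1^{\otimes j}\rangle = \langle u_0|u_1\rangle^j$ at the midpoint where $|\langle u_0|u_1\rangle|\approx 1/\sqrt 2$.

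Having constructed the amplifier, I would prove the growth estimate: let $\theta(t)$ be the angular distance between the two trajectories started at $u_0, u_1$. Using the chain rule on $\theta$, the choice of $R_j$ above gives $\dot\theta \ge (\sum_j j\,\|R_j\|\,2^{-j/2})\sin\theta\cdot(1-o(1))$ uniformly until $\theta$ becomes $\Theta(1)$, so $\theta(t)\gtrsim \theta(0)\exp(ct)$ with $c=\sum_{j=1}^k j\,\|R_j\|\,2^{-j/2}$. Consequently, starting from an angle $\theta(0) = e^{-ct}$, after time $t$ the two trajectories are at constant angular separation, i.e., the trace distance between $u(t)u(t)^\dagger$ and its counterpart exceeds $1-2\epsilon$.

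Finally I would close the reduction: suppose an algorithm solves the stated differential equation to trace error $2\epsilon$ using $q(t)$ queries to $O_u$ for $u(0)$. Given a state-discrimination oracle for $\ket{\psi}\in\{\ket{\psi_0},\ket{\psi_1}\}$ with overlap $1-\delta$ where $\delta=e^{-2ct}$, we feed it to the solver with the amplifier coefficients above, obtain an approximation to $u(t)$, and measure in a basis distinguishing the two amplified states. The success requires $q(t)$ queries, yet by the state-discrimination lower bound $q(t)=\Omega(1/\sqrt{\delta})=\Omega(e^{ct})$, which is precisely the claimed bound. The main technical obstacle will be Step 2: writing down $R_j$ and $H(t)$ explicitly so that (i) the evolution is genuinely norm-preserving on both trajectories simultaneously (not merely perturbatively), and (ii) the lower estimate on $\dot\theta$ really achieves the advertised dependence on $\|R_j\|$, $j$, and $2^{-j/2}$; the other steps are essentially standard once the amplifier is in hand.
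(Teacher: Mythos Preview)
Your high-level strategy---reduce from pure-state discrimination, exhibit an instance of the differential equation that acts as an angle amplifier, and invoke the Helstrom lower bound to force exponentially many oracle calls---is exactly the route the paper takes, and your closing reduction step matches the paper's almost verbatim. Where you diverge is in the construction of the amplifier and the origin of the constants $j$ and $2^{-j/2}$.

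The paper does \emph{not} build the amplifier from a projection of the form $\Pi_u^{\perp}(u^{\otimes j})$. Instead it chooses each $R_j$ to be diagonal in the computational basis, so that $\bra{x}R_j u^{\otimes j}=a_j|\braket{x|u}|^{j}\braket{x|u}$. This puts the nonlinearity precisely into the Childs--Young mould $\bra{x}K\ket{\psi}=\kappa(|\braket{x|\psi}|)\braket{x|\psi}$ with $\kappa(r)=\sum_j a_j r^{j}$, and the entire amplification analysis is then outsourced to their Lemma (Section~2.3 of~\cite{PhysRevA.93.022314}): one only has to compute the linear growth rate $g$ satisfying $\bar\kappa(z)\ge gz$ near $z=0$, where $\bar\kappa(z)=\kappa\big((\tfrac{1+z}{2})^{1/2}\big)-\kappa\big((\tfrac{1-z}{2})^{1/2}\big)$. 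A single application of l'H\^opital at $z=0$ gives $g=\sum_j a_j\,j\,2^{-j/2}$; the factor $2^{-j/2}$ comes from differentiating $((1\pm z)/2)^{j/2}$, not from a midpoint-overlap heuristic, and the factor $j$ is the chain-rule exponent, not the tensor-sensitivity argument you sketch. The time-dependent Hamiltonian $H(t)$ that keeps the evolution norm-preserving is then inherited wholesale from the Childs--Young construction, so the issue you flag as ``the main technical obstacle'' (making the amplifier genuinely unitary on both trajectories) never has to be confronted directly.

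Your projection-based amplifier may well work, but as written it would oblige you to reprove the Bloch-sphere amplification lemma from scratch and to manufacture $H(t)$ by hand; the paper sidesteps both by choosing the nonlinearity to land inside an existing black box. If you want to follow the paper, replace your Step~2 with the diagonal choice of $R_j$ above and quote Lemma~\ref{lemma:dist} for the growth estimate; the rest of your outline then goes through unchanged.
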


We note that the proof of this lower bound applies to the family of equations studied by \cite{2020arXiv201106571L}.  Previous works such as  \cite{https://doi.org/10.48550/arxiv.0812.4423} provide an algorithm with exponential time scaling respecting the bounds of Lemma \ref{lemma:nogo2}; however, it is not shown that these methods saturate the bound. Furthermore, the algorithm also scales exponentially in $\frac{1}{\epsilon}$. Given that more recent Hamiltonian simulation results were able to provide logarithmic scaling with the inverse of the error \cite{7354428} \cite{Low2019hamiltonian}, this provides a clear avenue for improvement.  Deciding whether it is possible to find an algorithm that has a query complexity that scales logarithmically with the error for the nonlinear context and also matches the scaling matching provided in Lemma \ref{lemma:nogo2} remained an open problem. Our paper provides an algorithm capable of doing so.

\begin{restatable}{theorem}{thmquantum}
\label{theorem:1}
    Suppose we have for bounded matrices $A$ and $B$ an equation of the form:

    $$\frac{d\ket{u}}{dt} = A\ket{u} + B\ket{u}^{\otimes2}$$
    Where $A$ is $d_{\|A\|}-sparse$ and $B$ is $d_{\|B\|}-sparse$, and the dimension of u is bounded by: $\dim(\ket{u}) = 2^n$. Then there exists a quantum algorithm simulating $u(T)$ for a time $T>0$ within error $E$ with respect to the Euclidean norm for some arbitrarily small constant $\delta'$ with at most

    $$O\left(\frac{e^{2Tc(1+\delta')}(d_A^2\|A\| + d_B^2\|B\|)\log(1/E)n}{\log\log(1/E)}\right)  $$
    oracle queries, where  $c$ is in $O(\|B\|)$.
\end{restatable}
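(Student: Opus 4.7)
The plan is to combine Carleman linearization with a state-of-the-art quantum linear differential equation solver, in the spirit of Liu et al.~\cite{Liu_2021} but in the norm-preserving regime where their dissipativity hypothesis fails.

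First, I would introduce the Carleman-linearized state $\ket{U_{\le K}} = \bigoplus_{k=1}^{K} \ket{u}^{\otimes k}$ up to a truncation order $K$ to be fixed later. Differentiating each block and substituting the original equation $\frac{d\ket{u}}{dt} = A\ket{u} + B\ket{u}^{\otimes 2}$ produces a linear ODE $\frac{d\ket{U_{\le K}}}{dt} = \mathcal{L}_K \ket{U_{\le K}} + \varepsilon_K(t)$, where $\mathcal{L}_K$ is block bidiagonal with the $k$-th diagonal block constructed from $A$ acting on each tensor factor and the super-diagonal block constructed from $B$ acting on pairs of factors, and $\varepsilon_K(t)$ is the residual coming from the coupling to the discarded $\ket{u}^{\otimes(K+1)}$. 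Using the sparsity hypothesis on $A$ and $B$, one verifies that $\mathcal{L}_K$ inherits a sparsity pattern of $O(k(d_A+d_B))$ per row in the $k$-th block, with spectral norm bounded by a polynomial in $K$ times $\|A\|+\|B\|$.

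Second, I would bound the Carleman truncation error. Since the true dynamics is norm-preserving, $\|\ket{u}^{\otimes k}(t)\|=1$ for all $t$, so $\|\varepsilon_K(t)\|$ can be controlled by $\|B\|$ uniformly in $t$ at block $K$. Propagating this residual through Duhamel's principle and applying a Gronwall-type estimate shows that the mismatch between $\ket{u}(T)$ and the first block of the truncated linear solution is bounded by $e^{Tc(1+\delta')}/(K!)$ up to polynomial prefactors, with $c\in O(\|B\|)$ read off from the off-diagonal couplings. Setting this below $E/2$ fixes $K=\Theta\!\left(\log(1/E)/\log\log(1/E)\right)$, accounting for the $\log/\log\log$ factor in the stated complexity.

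Third, I would solve the truncated linear system $\frac{d\ket{U_{\le K}}}{dt}=\mathcal{L}_K\ket{U_{\le K}}$ with a quantum linear ODE solver such as the truncated Taylor series method of Berry et al.~\cite{Berry_2017} or the improved algorithm of Krovi~\cite{Krovi_2023}. These solvers use $\widetilde{O}(sT\|\mathcal{L}_K\|_{\max}\,\kappa\,\log(1/E'))$ queries, where $s$ is the sparsity, $\kappa$ measures the growth of solution norms along the trajectory, and $E'$ is the solver's target error. Because the original $\ket{u}(t)$ is unitary and each $\ket{u}^{\otimes k}(t)$ has unit norm, the norm of the exact (non-truncated) Carleman solution is bounded, and the residual analysis above gives $\kappa\le e^{Tc(1+\delta')}$ for the truncated propagator. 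Substituting the sparsity $O(d_A+d_B)$, the block-encoding overhead $O(n)$ for a $2^n$-dimensional state, and the per-step sparse-access cost proportional to $d_A^2\|A\|+d_B^2\|B\|$ (coming from row/column oracles composed with entry oracles in the standard sparse-Hamiltonian model) then assembles into the claimed query count.

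The main obstacle will be the second and third steps in tandem: tracking the condition-number-like factor $\kappa$ of the Carleman-truncated system so that it stays bounded by $e^{Tc(1+\delta')}$ rather than blowing up exponentially in $K$. A naive analysis that ignores the norm-preservation of the underlying nonlinear flow gives bounds of the form $e^{TK(\|A\|+\|B\|)}$, which would ruin the query complexity. Exploiting unitarity of $\ket{u}(t)$ to argue that the off-diagonal $B$-couplings do not inflate the effective exponential rate beyond $O(\|B\|)$ — essentially a cancellation between forward creation and backward destruction of tensor factors forced by norm-preservation — is the technical heart of the argument and the place where the unitary hypothesis, rather than a dissipativity hypothesis, is indispensable.
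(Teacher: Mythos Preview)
Your high-level picture (Carleman linearization followed by quantum simulation) is right, but step~2 contains a genuine gap that invalidates the rest of the argument. The Carleman truncation error for a \emph{fixed} order $K$ does \emph{not} decay like $e^{Tc}/K!$; the known estimates (Forets--Pouly, cited in the paper) diverge with $K$ once $t>1/\|B\|$. So you cannot take $K=\Theta(\log(1/E)/\log\log(1/E))$ and run a single linear solver for the whole interval $[0,T]$. The $\log/\log\log$ factor in the theorem has nothing to do with the Carleman order: it comes from the Taylor-series order inside an LCU simulation of each short substep. The Carleman dimension actually needed is exponential in $T$, not polylogarithmic in $1/E$.

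The paper's proof is structurally different from your step~3. It does not hand the truncated system to a black-box linear ODE solver (whose condition-number/dissipativity hypotheses would indeed fail here, as you anticipate). Instead it splits $[0,T]$ into $w\approx e^{Tc(1+\delta)}$ \emph{shrinking} time steps $\hat t_j=1/(jc(1+\delta))$ and, going backwards, uses a Carleman block $G_{m_j}$ whose size grows by $k_j\approx\log_{1+\delta}(j/\epsilon)$ at each step, so that the starting block has $m_w\sim e^{Tc}\log(1/\epsilon)$ levels. The truncation error per step is controlled by explicitly diagonalizing the scalar upper-bound matrix $\hat G_{ij}=\|G_{ij}\|$ and showing the tail is $O(\epsilon/j)$; this simultaneously shows each step's evolution is $O(\epsilon/j)$-close to a \emph{unitary}. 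That near-unitarity is then used directly: each step is implemented by truncated Taylor series plus robust oblivious amplitude amplification (LCU), and the $e^{2Tc}$ in the final count arises from summing $m_j$ over the $w$ steps. Your ``cancellation forced by norm-preservation'' intuition is correct in spirit, but it is realized through this per-step near-unitarity, not through a Gronwall bound on the full-time truncated propagator.
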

While the differential equation in Theorem~\ref{theorem:1} is focused on quadratic non-linearities this can easily be generalized to any polynomial case~\cite{https://doi.org/10.48550/arxiv.1711.02552}.

The relation between the run times of our quantum algorithm, and established deterministic classical numerical methods such as the Runge-Kutta method, is not entirely clear. While it is known that we may replicate any deterministic classical algorithm with a quantum algorithm using polynomial resources \cite{doi:10.1119/1.1463744}, it may not be excluded that a polynomially optimal quantum algorithm using our input model could run exponentially worse than deterministic classical numerical methods that use different oracles for their inputs. This apparent contradiction stems from the fact that our quantum algorithms have access to a weaker oracular model; deterministic algorithms generally can directly access the values of the vector $u$; while the quantum algorithm can only obtain a quantum state representing the input vector, more akin to a probability distribution. Thus; to create a classical equivalent to this problem, we propose an input model of a probability vector. Oracle queries to the input vector for this model are then random samples from our probability vector. To perform our operations on this large random vector, we will also proceed by a random sampling algorithm, making use of the well-known Path Integral Montecarlo method \cite{doi:10.1142/1170}. As is often the case for these types of algorithms, the runtime will depend on the variation in the output of our sampling - which may be very large, depending on the input to the problem. We will treat this variance as a separate variable - a more precise definition will be provided later.  

\begin{restatable}{theorem}{thmclassical}
\label{theorem:2}
   Suppose we have for bounded matrices $A$ and $B$ an equation of the form:

    $$\frac{d\ket{u}}{dt} = A\ket{u} + B\ket{u}^{\otimes2}$$
    
    Where $A$ and $B$ are $d_A$ and $d_B$-sparse respectively, $\ket{u}$ is a norm 1 vector in $\mathbb{C}^{2^n}$ and the assumption is made that the linear transformation on $\ket{u}$ remains unitary at all times $t$. Suppose that the path integral formulation for the Carleman block matrix of this problem has a variance of $\mathbb{V}(V)$. Then there exists a classical algorithm which outputs $v(j)$ such that:
        $$\|v(j)-\bra{u(0)}U(T)\ket{u(0)}\| \leq \epsilon$$
    for a time $T>0$ concerning the Euclidean norm, using at most

    $$O\left(\frac{\mathbb{V}(V)(d_A^2+d_B^2)e^{O(Tc(1+\delta))}}{\epsilon^2}\right)$$
    
    queries to our oracles, with our constants defined as previously.
\end{restatable}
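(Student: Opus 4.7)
The plan is to combine the Carleman linearization of the nonlinear ODE (established in Section \ref{sec:linearization}) with a classical path integral Monte Carlo estimator for the scalar quantity $\bra{u(0)}U(T)\ket{u(0)}$. The Carleman approach produces a truncated linear block-matrix operator $C$, built from $A$ and $B$, whose exponential $e^{CT}$ encodes the evolution of $\ket{u}$ together with its tensor powers. As in the quantum case, truncating the Carleman hierarchy at a sufficient order $N$ introduces an approximation error controlled by $e^{O(Tc(1+\delta))}$ under the unitary-evolution assumption, which will supply the exponential prefactor in the final bound.

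Next I would express $e^{CT}$ as a Trotter-type product of short-time propagators $(I+C\Delta t)$ and expand $\bra{u(0)}e^{CT}\ket{u(0)}$ as a sum over sequences of intermediate basis states, i.e.\ discrete "paths" of length $r = T/\Delta t$. Because $C$ inherits sparsity $O(d_A + d_B)$ per row from $A$ and $B$, at any intermediate state only $O(d_A + d_B)$ successors contribute, and these entries are computable with $O(d_A^2 + d_B^2)$ oracle queries to $A$ and $B$. I would then define a Monte Carlo sampling scheme in which each step draws a next state proportionally to the magnitude of the corresponding entry of $(I+C\Delta t)$, and records the accumulated phase/sign factors together with the final overlap against $\ket{u(0)}$; by construction, this single-path estimator $V$ has expectation equal to the desired matrix element.

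The cost analysis then proceeds in the standard Monte Carlo fashion. By Chebyshev's inequality, producing an $\epsilon$-accurate estimate with constant failure probability requires $O(\mathbb{V}(V)/\epsilon^2)$ independent samples, where $\mathbb{V}(V)$ is the variance of a single path sample. Each sample costs $O\!\left(r(d_A^2 + d_B^2)\right)$ oracle queries, and $r$ must be chosen large enough that the combined Trotter and Carleman truncation errors are below $\epsilon/2$; plugging in the Carleman-order bound from Section \ref{sec:linearization} supplies the $e^{O(Tc(1+\delta))}$ factor. Multiplying the number of samples by the per-sample cost and absorbing logarithmic factors yields the stated complexity.

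The main obstacle I expect is the variance: because the estimator is a product of signed or complex amplitudes over $r$ steps, $\mathbb{V}(V)$ can in principle blow up exponentially with $T$, the classical manifestation of the sign problem. Unlike the quantum algorithm, where interference is handled coherently by LCU/block-encoding techniques, here destructive interference must be absorbed into $\mathbb{V}(V)$. Consequently, rather than attempting a general variance bound, I would \emph{treat $\mathbb{V}(V)$ as an external parameter of the statement}, as the theorem already does, and emphasize that the scheme is a genuine dequantization of Theorem \ref{theorem:1} precisely in sign-problem-free regimes where $\mathbb{V}(V)$ stays polynomially bounded. The remaining pieces, Trotter error, Carleman truncation error, and per-sample query counting, are essentially routine given the corresponding analysis already assembled for the quantum algorithm.
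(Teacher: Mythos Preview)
Your proposal is essentially the same approach as the paper's: Carleman linearization, a Trotter-type product expansion, a discrete path-sum representation, Monte Carlo sampling with a Chebyshev bound, and treating $\mathbb{V}(V)$ as an external parameter to sidestep the sign problem. The paper differs from your sketch only in how it builds the short-time propagator: rather than using the first-order product $(I+C\Delta t)$ on the full Carleman matrix, it first symmetrizes $B$ into Hermitian pieces $B^{\pm}$, decomposes $G$ into a sum of $1$-sparse Hermitian terms via graph colouring, and Trotterizes those; since $e^{H}$ for $1$-sparse $H$ is $2$-sparse, each Trotter factor contributes a binary branch, which is what produces the path structure. Your Euler-style discretization gives an equivalent path expansion but with branching governed by the row-sparsity of $C$.

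One small point to tighten: you assert that $C$ has row-sparsity $O(d_A+d_B)$, but in the Carleman block at level $m$ the operators $A_m,B_m$ are sums of $m$ tensor-embedded copies of $A,B$, so the row-sparsity there is $O(m(d_A+d_B))$, not $O(d_A+d_B)$. This does not break the argument---the extra factor of $m$ (bounded by the truncation order, itself $e^{O(Tc)}$) is absorbed into the exponential prefactor---but you should account for it explicitly when you tally the per-step branching and per-sample query cost.
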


We note here that the constant $c$ used is identical to the constant used in theorem \ref{theorem:1}. We note that if we allow linear time scaling in $n$, the logarithm of the dimension, time scaling matching the quantum case is demonstrated in the same section. This is of interest, as the lower bound arguments for Lemma \ref{lemma:nogo2} do not rely on dimension scaling. We may further note that the output produced by our quantum algorithm must also be sampled to obtain an estimate of the expectation calculated by our classical algorithm. This process loses the logarithmic error scaling, making it instead linear with the inverse of $\epsilon$, as shown in Corollary \ref{corollary:errorScaling}. Thus the classical algorithm only has quadratically worse scaling with the error. 

For a more restricted case of equations which we will name \emph{Geometrically Local Equations} (note: these are strictly distinct from the quantum notion of a geometrically local Hamiltonian), an algorithm polynomially matching the quantum algorithm is achievable, as proposed in the following theorem,

\begin{theorem}
Suppose we have for bounded matrices $A$ and $B$ an equation of the form:

    $$\frac{d\ket{u}}{dt} = A\ket{u} + B\ket{u}^{\otimes2},$$
    where $\ket{u}$ is a norm 1 vector in $\mathbb{C}^{2^n}$ and the assumption is made that $\ket{u(t)}$ remains a unit vector at all times $t$. Assume further that our equation is geometrically $k-local$ in $\mathbb{Z}^D$. Then there exists a classical algorithm which outputs $v(j)$ in $\mathbb{C}^{2^n}$ such that:
        $$\|v(j)-\ket{u(T)}\| \leq \epsilon$$
    for a time $T>0$ with respect to the Euclidean norm, using at most

    $$O\left(\frac{(kT)^{D+1}e^{2(D+1)(|A|+2|B|)T}}{(2\epsilon)^{D+2}}\right)$$   
    queries to classical oracles $O_A, O_B$ such that $O_A(i,j)$ yields the value of the $i^{\rm th}$ non-zero matrix element in row $j$ and similarly $O_B(i,j)$ yields the corresponding matrix element in row $j$ of $B$ and oracles $f_A,f_B$ such that $f_A(x)=y$ if and only if $A_{xy}$ is the $i^{\rm th}$ non-zero matrix element in row $x$ and $f_B$ yields the corresponding locations of the non-zero matrix elements of $B$.
\end{theorem}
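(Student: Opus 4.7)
The plan is to combine a causal-cone argument with the deterministic Euler scheme developed in Section~\ref{sec:classical}, using geometric $k$-locality in $\mathbb{Z}^D$ to replace the full $2^n$-dimensional state with an effective system whose size grows only polynomially in $T$. The key observation is that a single Euler update $u\mapsto u+\Delta t(Au+Bu^{\otimes 2})$ modifies the value at a site $x$ using only sites within lattice distance $k$ of $x$, since $A$ and $B$ have $k$-local support; iterating over $N=T/\Delta t$ steps, the value at $x$ at time $T$ depends only on sites within distance $kN$ at time $0$, and a Gr\"onwall-type continuous analogue gives an effective Lieb--Robinson velocity of order $k(\|A\|+2\|B\|)$.

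First I would fix the (bounded) initial support of $\ket{u(0)}$ and, for each target site $x$, construct a backward light-cone of shrinking balls $S_{T}(x)\supset S_{T-\Delta t}(x)\supset\cdots\supset S_0(x)=\{x\}$, where $S_t(x)$ consists of all sites whose values at time $t$ can affect $u_x(T)$ under the $k$-local dynamics. The total space-time volume of this cone is $\sum_{n=0}^{N} |S_{n\Delta t}(x)| = O((kT)^{D+1}/\Delta t)$, explaining the $(kT)^{D+1}$ prefactor after $\Delta t$ is chosen. I would then restrict the Euler integrator of Section~\ref{sec:classical} to this union of cones over the sites we need to report; each site update reads at most $O(k^{D})$ entries from the sparsity oracles $O_A, f_A, O_B, f_B$, so the per-update cost is essentially constant in $T$ and $\epsilon$.

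Next, I would carry out the error propagation. Each Euler step produces a local truncation error of order $\Delta t^{2}(\|A\|+\|B\|)$ at every site touched; pushing these errors forward through the causal cone with Lipschitz constant $\|A\|+2\|B\|$ and applying Gr\"onwall's inequality yields an amplification of $e^{(\|A\|+2\|B\|)T}$, while the simultaneous control of error at all $O((kT)^{D})$ terminal sites and its propagation through the $D+1$-dimensional causal volume produces the overall factor $e^{2(D+1)(\|A\|+2\|B\|)T}$. To meet the Euclidean tolerance $\epsilon$, the per-site error budget is $O(\epsilon/(kT)^{D/2})$, and this forces $\Delta t$ to scale like $\epsilon^{D+2}/\bigl((kT)^{D+1}e^{2(D+1)(\|A\|+2\|B\|)T}\bigr)$; multiplying the number of steps by the cone volume and by the constant per-update oracle cost yields exactly the stated query bound.

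The main obstacle is making the light-cone statement rigorous in the presence of the quadratic term $B\ket{u}^{\otimes 2}$: the linear part is immediate from $k$-locality, but the nonlinear term couples pairs of sites, and one must use the assumption that $\|u(t)\|=1$ for all $t$ (together with the $k$-locality of $B$) to bound the induced coupling and keep the effective propagation speed proportional to $k(\|A\|+2\|B\|)$ rather than growing with the state's amplitudes. Once this Lieb--Robinson bound is in hand, restricting the Euler analysis of Section~\ref{sec:classical} to the causal subsystem and tracking constants is routine, and the final aggregation over target sites to achieve Euclidean error $\epsilon$ is just a union bound.
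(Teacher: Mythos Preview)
Your proposal takes a substantially different route from the paper's proof, and it has a genuine gap together with some accounting that does not hold up.

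The paper's argument is much more direct: it simply reuses the Euler scheme of Lemma~\ref{euler} verbatim and recomputes the per–step cost. The crucial point you are missing is \emph{why} the initial vector can be treated as sparse. In the paper this comes from the sampling step inherited from Lemma~\ref{euler}: the initial state is approximated by drawing $O(e^{2(\|A\|+2\|B\|)T}/\epsilon^2)$ samples from the probabilistic oracle, so the starting vector has at most that many non-zero entries. Geometric $k$-locality then guarantees that after $j$ Euler steps every non-zero entry can only have spread to sites within lattice distance $jk$, so the support at step $j$ is at most $O\bigl(\tfrac{e^{2(\|A\|+2\|B\|)T}}{\epsilon^2}\bigr)\cdot (jk)^D$. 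Summing this over the $m=T/h$ steps, with $h$ chosen exactly as in Lemma~\ref{euler}, produces the stated bound. In particular, the factor $e^{2(D+1)(\|A\|+2\|B\|)T}/(2\epsilon)^{D+2}$ arises from the product of the initial–sampling cost $e^{2LT}/\epsilon^2$ and the factor $m^{D}$ (with $m\propto e^{LT}/\epsilon$), \emph{not} from a Gr\"onwall amplification across a $(D{+}1)$-dimensional causal volume as you suggest.

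Your proposal instead assumes a ``(bounded) initial support of $\ket{u(0)}$'' and builds a backward light-cone from each target site. Nothing in the hypotheses gives bounded initial support; without the sampling step the forward cone of a generic unit vector is the whole space and the locality argument buys nothing. Moreover, your cone-volume estimate $\sum_{n=0}^{N}|S_{n\Delta t}(x)| = O((kT)^{D+1}/\Delta t)$ is off: after $n$ discrete steps the radius is $kn$, so the sum is $O(k^D N^{D+1})=O\bigl((kT)^{D+1}/\Delta t^{D+1}\bigr)\cdot k^{-1}$, not $O((kT)^{D+1}/\Delta t)$. Consequently your choice of $\Delta t$ and your attribution of the exponents are reverse-engineered from the answer rather than derived. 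The fix is to drop the Lieb--Robinson machinery entirely, start from the sampled sparse initial state as in Lemma~\ref{euler}, and simply count how many non-zero coordinates the Euler iterate can have at each step; that is all the paper does.
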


\section{Quantum Lower Bound Argument}
\label{sec:nogo}

We will make use of the arguments from \cite{PhysRevA.93.022314} to create a lower bound for the problem we are studying. Our lower bound builds upon the results from section 2 in \cite{PhysRevA.93.022314}, which we summarize in the following lemma. We will require equations of the form:

\begin{equation}
\label{eq:nogo}
    i\frac{d}{dt}\ket{\psi} = H(t)\ket{\psi} + K\ket{\psi},
\end{equation}
where $K$ is a non-linear operator defined by: 

\begin{equation}
\label{eq:K}
    \bra{x}(K\ket{\psi}) = \kappa(|\braket{x|\psi}|)\braket{x|\psi},
\end{equation} 
for any computational basis vector $\ket{x}$, and some function $\kappa: \mathbb{R}^+ \to \mathbb{R}^+$. Thus, $K$ is a diagonal matrix with elements $\kappa(|\braket{x|\psi}|)$. $H(t)$ is a time-dependent hermitian operator chosen according to $K$. To avoid the quadratic term of $\kappa(|\braket{x|\psi}|)$ and allowing us to construct odd polynomials; we may also alternatively define, for $\braket{x|\psi} = a_x + b_xi$,

\begin{equation}
\label{eq:Kodd}
    \bra{x}(K\ket{\psi}) = \kappa(a+b)\braket{x|\psi},
\end{equation} 

We note here that for the restricted domain $b_x = 0$, $a_x \geq 0$ in which the arguments of \cite{PhysRevA.93.022314} are made, this alternative definition is equivalent. 
We will equally make use of the following definition from Equation 8 of \cite{PhysRevA.93.022314}:

\begin{equation}
\Bar{\kappa}(z) := \kappa \left(\left(\frac{1+z}{2}\right)^{1/2}\right) - \kappa \left(\left(\frac{1-z}{2}\right)^{1/2}\right).
\end{equation}

Then; in the Bloch sphere representation, our state $\ket{\phi} = (x,y,z)$ under the action of only the nonlinearity will act as:

\begin{equation}
\label{eq:rotation}
   \frac{d}{dt}(x,y,z) = \Bar{\kappa}(z)(-y,x,0).
\end{equation}

Then the following lemma applies:

\begin{lemma}[Section 2.3 of \cite{PhysRevA.93.022314}] \label{lemma:dist}
Suppose there are constants $g, \delta >0$ such that $\Bar{\kappa}(z) \geq gz$ for all $0 \leq z \leq \delta$. Then, evolving the equation \ref{eq:nogo} for time $O(\frac{1}{g}\log\left(\frac{1}{\epsilon}\right))$ will allow us to distinguish 2 chosen states $\ket{u}$ and $\ket{v}$ with overlap $|\langle{u}|{v}\rangle|\le 1-\epsilon$
with success probability greater than $\frac{4}{5}$.
\end{lemma}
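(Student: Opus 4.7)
The plan is to prove this lemma by recapitulating the amplification-by-iteration argument of Childs and Young, whose geometric content is already encoded in equation~\eqref{eq:rotation}: under the nonlinear term alone, a Bloch vector $(x,y,z)$ rotates about the $z$-axis at angular rate $\Bar{\kappa}(z)$, so $z$ is conserved while transverse components rotate at a rate lower bounded by $gz$ in the window of interest. Two Bloch vectors with slightly different $z$-coordinates therefore pick up a transverse separation at a rate proportional to that $z$-separation, and this is the engine of the amplification.

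First I would reduce distinguishing $\ket{u}$ and $\ket{v}$ (with overlap at most $1-\epsilon$) to amplifying a Bloch-sphere separation of magnitude $\theta_0=\Omega(\sqrt{\epsilon})$ inside a single-qubit invariant subspace; the bound $\theta_0\ge\Omega(\sqrt{\epsilon})$ follows from $\cos(\theta_0/2)\le 1-\epsilon$. Since $H(t)$ may be chosen freely, I may apply arbitrary single-qubit unitaries to the pair at no asymptotic cost and so orient them initially to have $z$-separation $\delta_z=\theta_0$, negligible transverse separation $\delta_\perp$, and both $z$-coordinates lying inside the window $[0,\delta]$ where $\Bar{\kappa}(z)\ge gz$.

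The core of the proof is an expand-and-rotate doubling step of duration $\tau=\Theta(1/g)$. In the expansion phase I turn $H$ off and let $K$ act: the transverse separation grows as $\delta_\perp \mapsto \delta_\perp + (\Bar{\kappa}(z_1)-\Bar{\kappa}(z_2))\tau \ge g\,\delta_z\,\tau$, so taking $g\tau$ a sufficiently large constant yields a transverse separation of at least $\delta_z$. In the rotate phase I use $H$ to apply a $\pi/2$ Bloch rotation that sends that transverse separation back onto the $z$-axis, together with a free translation that recenters the pair inside $[0,\delta]$; by Pythagoras the new $z$-separation satisfies $\delta_z' \ge \sqrt{2}\,\delta_z$. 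Iterating multiplies $\delta_z$ by a constant factor greater than $1$ per step until $\delta_z$ exceeds $\delta$, which takes $O(\log(1/\epsilon))$ steps and total time $O(\log(1/\epsilon)/g)$. After this amplification a final $O(1)$ sequence of linear rotations and nonlinear doublings drives the Bloch angular separation to any desired constant below $\pi$, so a projective measurement along the appropriate axis distinguishes the two evolved states with probability greater than $\tfrac{4}{5}$.

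The main obstacle I anticipate is the careful bookkeeping required to certify that every intermediate pair of Bloch vectors stays inside the hypothesis window $0\le z\le\delta$: the re-centering done by $H(t)$ after each doubling must be chosen so that both states land in the interior of that window and so that the linearization $\Bar{\kappa}(z_1)-\Bar{\kappa}(z_2)\ge g\,\delta_z$ remains valid. This is possible because each step only needs a constant factor of headroom on either side of the midpoint, and because the hypothesis holds on a whole interval rather than only at a point; this is precisely the construction carried out in Section 2.3 of \cite{PhysRevA.93.022314}, and it goes through verbatim here since the only property of $\Bar{\kappa}$ used is the linear lower bound assumed in the statement.
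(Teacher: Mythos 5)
The paper does not actually prove this lemma: it is imported verbatim as a cited result (Section 2.3 of Childs and Young), so there is no in-paper proof to compare against, and your proposal is best judged as a reconstruction of the cited argument. Your overall strategy — restrict to the two-dimensional span of the states, use the fact that the nonlinearity rotates Bloch vectors about the $z$-axis at a $z$-dependent rate, alternate free nonlinear evolution with linear $\pi/2$ rotations to convert azimuthal separation back into $z$-separation, and iterate $O(\log(1/\epsilon))$ doubling steps of duration $\Theta(1/g)$ — is exactly the Childs--Young construction, and the reduction $\theta_0 = \Omega(\sqrt{\epsilon})$ from $|\langle u|v\rangle| \le 1-\epsilon$ is correct.

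There is, however, one step in your write-up that does not follow from the stated hypothesis. You bound the growth of the transverse separation by $\bigl(\Bar{\kappa}(z_1)-\Bar{\kappa}(z_2)\bigr)\tau \ge g\,\delta_z\,\tau$ for a pair recentered ``on either side of the midpoint'' inside $[0,\delta]$. The hypothesis only gives the pointwise lower bound $\Bar{\kappa}(z)\ge gz$; it says nothing about differences $\Bar{\kappa}(z_1)-\Bar{\kappa}(z_2)$ for two interior points. For example, $\Bar{\kappa}$ could be constant on a subinterval of $[0,\delta]$ while still satisfying $\Bar{\kappa}(z)\ge gz$ there, in which case two states placed in that subinterval acquire no relative phase at all and the doubling step fails. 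The repair is the one the cited source actually uses: after each linear rotation, anchor one of the two states exactly on the equator, $z_2=0$, where $\Bar{\kappa}(0)=\kappa(2^{-1/2})-\kappa(2^{-1/2})=0$ identically by definition, and place the other at $z_1=\delta_z\in[0,\delta]$. Then the differential rotation rate is $\Bar{\kappa}(\delta_z)-\Bar{\kappa}(0)=\Bar{\kappa}(\delta_z)\ge g\,\delta_z$, which is precisely what the hypothesis supplies, and the rest of your iteration (including the final $O(1)$ boosting once the separation reaches the constant $\delta$) goes through as you describe.
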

 First, we must define the oracles we will use throughout this paper. 

\begin{definition}[Oracles $O_A$, $O_B$, $O_u$]
\label{def:oracles}
We define quantum oracles that provide us with access to the initial state $u(0)$ as well as the matrices $A$ and $B$ in the following manner. 
\begin{itemize}
    \item Given a matrix $A \in 	\mathbb{C}^{2^n \times 2^n}$ of sparsity $d_A$, integers $i \leq 2^n$, $j \leq d_A$. We will define the oracle $O_A\ket{i}\ket{j}\ket{0}\ket{0} = O_A\ket{i}\ket{j}\ket{p}\ket{q}$ where $p$ is an integer indicating the location of the $j-th$ non-zero column of row $i$ of $A$, and $q$  is a binary representation of $A[i,p]$.
    \item Given a matrix $B \in 	\mathbb{C}^{2^n \times 2^{2n}}$ of sparsity $d_B$, integers $i \leq 2^n$, $j \leq d_B$. We will define the oracle $O_B\ket{i}\ket{j}\ket{0}\ket{0} = O_B\ket{i}\ket{j}\ket{p}\ket{q}$ where $p$ is an integer indicating the location of the $j-th$ non-zero column of row $i$ of $B$, and $q$  is a binary representation of $B[i,p]$.
    \item Given an initial state vector $\ket{u(0)} \in \mathbb{C}^{2^n}$, we define $O_u$ as the unitary operator that prepares the initial state via $O_u\ket{0}^{\otimes  n} = \ket{u(0)}$.
\end{itemize}
\end{definition}
We will use lemma \ref{lemma:dist} and the above definition of $O_u$ to prove our main lemma of this section, which is a strengthening of the lower bounds contained in~\cite{PhysRevA.93.022314}.

\lemmanogo*

\begin{proof}
    
Let us make a choice of $R_j$ such that for all $\ket{x}$ and $\ket{u}$
\begin{equation}
    \bra{x}(R_j\ket{u}^{\otimes j}) = a_j|\braket{x|u}|^j\braket{x|u}
\end{equation} 
Then we have that:

\begin{equation}
    \lim_{z \to 0} \Bar{\kappa}(z) = \lim_{z \to 0}\frac{a_1\left(\frac{1+z}{2}\right)^{1/2} + a_2\left(\frac{1+z}{2}\right)^{2/2} + ... + a_k\left(\frac{1+z}{2}\right)^{k/2}}{z}.
\end{equation}
We make use of l'Hospital's rule to evaluate this limit, differentiating the nominator and denominator;
\begin{equation}
    \lim_{z \to 0} \Bar{\kappa}(z) = \lim_{z \to 0}\frac{\sum_{j=1}^k 2^{-j/2 -1}a_j(1+z)^{j/2-1}+2^{-j/2 -1}a_j(1-z)^{j/2-1}}{1} = \sum_{j=1}^k 2^{-j/2}a_j.
\end{equation}
Thus; as these are analytic functions, we may say that $g$ as defined in Lemma \ref{lemma:dist} may take any value greater than $\sum_{j=1}^k 2^{-j/2}a_j$; and thus, we may distinguish $\ket{\psi}, \ket{\phi}$ by simulating our differential equation for time $T = O\left(\frac{1}{\sum_{j=1}^k 2^{-j/2}a_j}\log\left(\frac{1}{\epsilon}\right)\right)$. We note here that for such a $\kappa$, equation \ref{eq:nogo} may be written in the form:

\begin{equation}
    \frac{du}{dt} = R_1u + R_2u^{\otimes2}+... + R_ku^{\otimes k},
\end{equation}
as required for Lemma \ref{lemma:nogo2}, with $\|R_{j+1}\| = a_j$.

Suppose we have access to an oracle to some unknown state, and are promised that the oracle either yields a state operator $\ketbra{\psi}{\psi}$ or $\ketbra{\phi}{\phi}$ such that $\|\ketbra{\psi}{\psi} - \ketbra{\phi}{\phi}\|_{\rm Tr}\ge 2\Delta$ for $\Delta>0$. By the Helstrom bound, $N = \theta\left(\frac{1}{\Delta}\right)$ copies of our state (or oracle queries to $O_u$) are necessary to determine with probability $\frac{2}{3}$ which state the oracle generates. (This is an older result, but is stated in a similar form in Section 4 of \cite{PhysRevA.93.022314}). The success rate of our algorithm for Lemma \ref{lemma:dist} depends on the distance between our states at the end of our evolution; we may permit an error $\epsilon$ sufficiently small to maintain a success rate greater than $2/3$ with triangle inequality. 

Suppose, now, that the differential equation can be simulated for time $t$ and error $\epsilon$ with $e^{o(t \times \sum_{j=2}^{k}a_j2^{-j/2}j)}$ queries to our oracle on our state $\ket{\psi}$ or $\ket{\phi}$. Then using our algorithm in Lemma \ref{lemma:dist}, the number $N$ of oracle queries required to distinguish $\ket{\psi}$from  $\ket{\phi}$ with success probability $\geq 2/3$  will be:

\begin{equation}
    N = e^{o(T \times \sum_{j=2}^{k}a_j2^{-j/2}j)} = e^{o\left(\left(\frac{1}{\sum_{j=1}^k 2^{-j/2}a_j}\log\left(\frac{1}{\epsilon}\right)\right) \times \sum_{j=2}^{k}a_j2^{-j/2}j\right)},
\end{equation}

\begin{equation}
    N=e^{o\left(\log\left(\frac{1}{\epsilon}\right)\right)} = o\left(\frac{1}{\epsilon}\right).
\end{equation}
This violates the Helstrom bound, and thus, we conclude that no such algorithm for solving our differential equation exists. This concludes our proof of Lemma \ref{lemma:nogo2}.
\end{proof}

We will also further note that a lower bound of $d\|A\|t$ will be inherited from the regular Hamiltonian Simulation problem; which is shown, for example, in \cite{Berry_2015}. 

\section{Quantum Algorithms for Non-Linear Differential Equations}

\label{sec:algo}
We will now attempt to find algorithms that approach our exponential lower bound from the previous section.

\label{sec:linearization}

As in \cite{Liu_2021} \cite{wu2024quantumalgorithmsnonlineardynamics}, we will start by making use of the Carleman Linearization process to obtain a higher-dimensional linear approximation of our equation. As a first step; we reduce our polynomial differential equation into the quadratic case. That this is permitted is a result taken directly from \cite{https://doi.org/10.48550/arxiv.1711.02552}. 

\begin{lemma}[Proposition 3.4 of \cite{https://doi.org/10.48550/arxiv.1711.02552}]
\label{lem:prop}
Let $u:\mathbb{R} \mapsto \mathbb{C}^{D\times D}$ and let $R_j\in \mathbb{C}^{D\times D^j}$.
   The $k$-th order system ($k  \geq 2$):

   $$\frac{du}{dt} = R_1u + R_2u^{\otimes2}+... + R_ku^{\otimes k}$$
  may be reduced to a quadratic system in $\Tilde{u}$, that is there exist matrices $A$ and $B$ such that
\begin{equation}\label{quadequation}
    \frac{d\Tilde{u}}{dt} = A\Tilde{u} + B\Tilde{u}^{\otimes2}
\end{equation}
  where $\Tilde{u} := \{u, u^{\otimes2}, u^{\otimes3},\ldots, u^{\otimes (k-1)}\}$. Moreover; $A$, $B$ respect:

  $$\|A\| \leq \max_{1\leq i \leq k-1} (k-i) \sum_{j=1}^i \|R_j\|$$
  $$\|B\| \leq (k-1) \sum_{j=2}^k \|R_j\|$$
here $\|\cdot\|$ refers to the spectral norm.
  
\end{lemma}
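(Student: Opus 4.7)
The plan is to define the lifted state $\tilde u := (u^{\otimes 1}, u^{\otimes 2}, \ldots, u^{\otimes(k-1)})$ as a direct sum of blocks indexed by tensor order $m \in \{1,\ldots,k-1\}$ and to derive the closed quadratic ODE it satisfies. First, I would differentiate each block using the Leibniz rule for tensor powers and substitute the original equation to obtain
\[
\frac{d(u^{\otimes m})}{dt} = \sum_{\ell=1}^{m} \sum_{j=1}^{k} \bigl(I^{\otimes(\ell-1)} \otimes R_j \otimes I^{\otimes(m-\ell)}\bigr)\, u^{\otimes(m+j-1)}.
\]
Every summand depends only on $u^{\otimes p}$ with total order $p := m + j - 1 \in \{1,\ldots,2k-2\}$, so every term on the right is either linear in $\tilde u$ (when $p \le k-1$) or quadratic in $\tilde u$ (when $p \ge k$, via the factorization $u^{\otimes p} = u^{\otimes(k-1)} \otimes u^{\otimes(p-k+1)}$, both factors lying in $\tilde u$).

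I would then split the double sum along this boundary to read off $A$ and $B$. The linear piece gives the explicit block
\[
A_{m,n} = \sum_{\ell=1}^{m} I^{\otimes(\ell-1)} \otimes R_{n-m+1} \otimes I^{\otimes(m-\ell)}, \qquad m \le n \le k-1,
\]
with $A_{m,n}=0$ otherwise, and the quadratic piece gives analogous, combinatorially more involved blocks of $B$ indexed by triples $(m,a,b)$ with $a,b \in \{1,\ldots,k-1\}$ and $a+b \ge k$. A choice of convention for the factorization (for instance $a = k-1$ throughout) resolves the nonuniqueness and defines $B$ canonically.

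For the spectral-norm bounds, the key elementary fact is $\|I^{\otimes a} \otimes R_j \otimes I^{\otimes b}\| = \|R_j\|$, which combined with the triangle inequality over the $\ell$-sum yields $\|A_{m,n}\| \le m\,\|R_{n-m+1}\|$. Summing the block norms along the $m$-th row gives $m \sum_{j=1}^{k-m} \|R_j\|$, and the substitution $i = k-m$ recovers precisely the target expression $\max_{1\le i \le k-1}(k-i)\sum_{j=1}^i \|R_j\|$. To promote this row-sum control into an operator-norm bound, I would decompose $A$ as a sum over its block superdiagonals indexed by $j$: the $(j{-}1)$-th superdiagonal contains at most one nonzero block per row and per column, so its spectral norm equals the maximum of its block norms, and a careful reassembly of these pieces delivers the claimed inequality. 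A parallel argument for $B$, exploiting that the quadratic regime forces $j \ge 2$ while $m \le k-1$, produces the bound $\|B\| \le (k-1)\sum_{j=2}^k \|R_j\|$.

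The main obstacle is the transition from block-row estimates to the spectral norm, since the naive inequality $\|A\| \le \max_m \sum_n \|A_{m,n}\|$ is not valid for generic block matrices; the superdiagonal decomposition must be orchestrated so that the stated maximum, rather than a larger aggregate, actually controls $\|A\|$. For $B$ the same difficulty reappears, compounded by the combinatorial bookkeeping on $\tilde u^{\otimes 2}$: different factorizations of $u^{\otimes p}$ yield different valid $B$'s, and the bound must hold uniformly over the chosen convention without implicit double-counting of identical blocks. Once this accounting is set up, the two claimed inequalities reduce to the blockwise application of $\|I^{\otimes a} \otimes R_j \otimes I^{\otimes b}\| = \|R_j\|$.
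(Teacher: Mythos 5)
Your construction is essentially the one the paper uses (the paper itself does not prove this lemma but imports it as Proposition 3.4 of the cited reference and then writes the blocks out explicitly as ``transfer matrices'' $T^i_{i+j}$): Leibniz rule on $u^{\otimes m}$, substitution of the ODE to get terms of total order $p=m+j-1$, a split at $p=k$ into linear and quadratic parts, and a fixed factorization convention for the quadratic terms (the paper tensors $B$ with $\bra{k-1}$, i.e.\ always takes $u^{\otimes(k-1)}$ as the second factor; your choice of it as the first factor is immaterial). Up to the identification of the blocks and the estimate $\|A_{m,n}\|\le m\,\|R_{n-m+1}\|$, the proposal is correct and matches the paper.

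The gap is in the final step, and you have in fact located it yourself but not closed it. The stated bound $\max_{i}(k-i)\sum_{j\le i}\|R_j\|$ is exactly the maximum block-row sum of $A$; that quantity controls the induced max-row-sum norm, not the spectral norm. Your proposed repair --- decomposing $A$ into block superdiagonals, each of which has spectral norm equal to its largest block --- gives $\|A\|\le\sum_{d=0}^{k-2}\max_m\|A_{m,m+d}\|\le\sum_{j=1}^{k-1}(k-j)\|R_j\|$, and this is in general \emph{strictly larger} than the claimed $\max_i(k-i)\sum_{j=1}^{i}\|R_j\|$: for the maximizing $i$ one has $(k-i)\sum_{j\le i}\|R_j\|\le\sum_{j\le i}(k-j)\|R_j\|\le\sum_{j=1}^{k-1}(k-j)\|R_j\|$, with strictness whenever more than one $R_j$ is nonzero. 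So ``a careful reassembly of these pieces delivers the claimed inequality'' is an assertion the superdiagonal route cannot make good on. To be fair, the paper is no more rigorous here --- it says only that the bounds ``follow'' from the blockwise triangle inequality --- and the discrepancy is most naturally resolved by reading $\|\cdot\|$ in the source proposition as the max-row-sum operator norm rather than the spectral norm. If the spectral norm is insisted upon, the honest output of your argument is the superdiagonal bound $\sum_{j}(k-j)\|R_j\|$, or the block Schur estimate $\|A\|\le\sqrt{r\,c}$ with $r,c$ the maximal block-row and block-column sums; either should be stated in place of the lemma's bound, or the stronger claim needs an argument neither you nor the paper supplies.
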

We will continue using $\|\cdot\|$ as a reference to the spectral norm throughout the paper.

To describe our matrices $A$ and $B$, it will first be necessary to define the \emph{transfer matrices}:

\begin{equation}
    T^i_{i+j} := \sum_{j = 0}^{m-1}I^{\otimes j} \otimes R_i \otimes I^{\otimes(k-j-1)}
\end{equation}

Our matrices $A$ and $B$ will be described as:

\begin{equation}
    A = 
    \begin{bmatrix}
        T^1_1 & T^1_2 & T^1_3 &\dots& T^1_{k-1} \\
        0 & T^2_2 & T^2_3 &\dots &T^2_{k-1} \\
        0 & 0 & T^3_3 &\dots &T^3_{k-1} \\
        \vdots & \vdots & \vdots & \ddots & \vdots \\
        0 & 0 & 0 & 0 &T^{k-1}_{k-1} 
        
    \end{bmatrix}
\end{equation}

\begin{equation}
    B = 
    \begin{bmatrix}
        T^1_k & 0 & 0 &\dots & 0 \\
        T^2_k & T^2_{k+1} & 0 &\dots & 0 \\
        T^3_k & T^3_{k+1} & T^3_{k+2} &\dots & 0 \\
        \vdots & \vdots & \vdots & \ddots & \vdots \\
        T^{k-1}_k & T^{k-1}_{k+1} & T^{k-1}_{k+2} &\dots &  T^{k-1}_{2(k-1)}
    \end{bmatrix}  
    \otimes \bra{k-1}
\end{equation}
We note that $\|T^i_{i+j}\|\leq j\|T_j\|$ hold by the triangle inequality and the subadditivity of the operator norm; our bounds on the norms of $A$ and $B$ follow. A further important consequence of this is that the sparsity remains manageable; $T^i_{i+j}$ has a sparsity of at most $jd_{R_i}$, where $d_{R_i}$ is the sparsity of $R_i$. We may then say the following: 

\begin{lemma}
    Given matrix $A$ and $B$ as defined in lemmma \ref{lem:prop}, the sparsities $d_A$ and $d_B$ may be bounded as:

    $$d_A \leq \max_{1\leq i \leq k-1} (k-i) \sum_{j=2}^{k} (j-i)d_{R_i}$$

    $$d_B \leq \max_{1\leq i \leq k-1} \sum_{j=1}^{i} (k-j)d_{R_i} $$
\end{lemma}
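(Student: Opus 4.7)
The plan is to obtain the sparsity bounds on $A$ and $B$ by first controlling the per-row sparsity of each transfer block $T^i_{i+j}$ and then summing those bounds over the non-zero blocks appearing in a single block row. For the first step, observe from the definition $T^i_{i+j} = \sum_{\ell} I^{\otimes \ell} \otimes R_p \otimes I^{\otimes(i-\ell-1)}$ (with $p$ the $R$-index dictated by the Carleman bookkeeping) that each summand has the same per-row sparsity as the non-identity factor $R_p$: tensoring on either side with an identity factor merely pins the corresponding subsystem indices to a single value and neither creates nor destroys non-zero entries in any given row. By the subadditivity of row-sparsity under matrix addition, and since the outer sum has at most $i$ terms, this yields $d_{T^i_{i+j}} \le i \cdot d_{R_p}$.

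Second, I would exploit the block structure of $A$. Each flat row of $A$ lies within a unique block row $i$, whose non-zero blocks are $T^i_i, T^i_{i+1}, \ldots, T^i_{k-1}$. The non-zero entries of a flat row are partitioned across the per-row non-zeros of these blocks, so $d_A \le \max_{1 \le i \le k-1} \sum_{j=0}^{k-1-i} d_{T^i_{i+j}}$. Substituting the per-block bound from the first step and reindexing yields the claimed inequality on $d_A$. The argument for $d_B$ is structurally identical: $B$ is a block-lower-triangular array tensored with $\bra{k-1}$ to select the appropriate copy of $u^{\otimes(k-1)}$ in the Kronecker factoring, and one repeats the same per-row summation over the non-zero blocks $T^i_{k}, T^i_{k+1}, \ldots, T^i_{2(k-1)}$ appearing in block row $i$.

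The main obstacle is purely bookkeeping: the definition of $T^i_{i+j}$ reuses $j$ as both the outer subscript and the summation index, and leaves implicit which $R_p$ appears inside the sum for each block. To make the argument airtight one first reconciles the indexing with the Carleman derivative identity $\frac{d}{dt} u^{\otimes i} = \sum_{p=1}^{k} \sum_{\ell=0}^{i-1} (I^{\otimes \ell} \otimes R_p \otimes I^{\otimes(i-\ell-1)}) u^{\otimes(i+p-1)}$, which pins down the correspondence $(i,j) \mapsto p$ and the number of summands per block. Once this correspondence is fixed, the sparsity bound reduces to a direct count, with no analytic content beyond the triangle inequality at the level of matrix sums.
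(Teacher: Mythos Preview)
Your proposal is correct and follows essentially the same approach as the paper. The paper's entire argument consists of the single sentence preceding the lemma, namely that ``$T^i_{i+j}$ has a sparsity of at most $jd_{R_i}$'' (via subadditivity of row-sparsity under addition, the tensor-with-identity observation you make), from which the block-row sums for $A$ and $B$ are declared to follow; your write-up is simply a more careful expansion of that sketch, and you correctly flag the index collision in the definition of $T^i_{i+j}$ that one must resolve to make the count precise.
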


 The transformation of Lemma~\ref{lem:prop} is exact and will preserve unitarity in the higher dimensional space if $A$ and $B$ are anti-Hermitian. 
Let us then define:
\begin{equation}
    u_m = \Tilde{u}^{\otimes m}
\end{equation}
\begin{equation}
    A_m = \sum_{j = 0}^{m-1}I^{\otimes j} \otimes A \otimes I^{\otimes(k-j-1)}
\end{equation}

\begin{equation}
    B_m = \sum_{j = 0}^{m-1}I^{\otimes j} \otimes B \otimes I^{\otimes(k-j-1)}
\end{equation}
We note that $\|A_m\|\leq m\|A\|$ and $\|B_m\|\leq m\|B\|$ hold by the triangle inequality and the subadditivity of the operator norm.  
The following lemma describes our Carleman linearization, as shown in \cite{https://doi.org/10.48550/arxiv.1711.02552},

\begin{lemma}[Proposition 3.2 of \cite{https://doi.org/10.48550/arxiv.1711.02552}]\label{carleman}

If $\Tilde{u}$ solves equation \ref{quadequation}, then 

$$\frac{du_m}{dt} = A_m u_m + B_m u_{m+1}.$$
\end{lemma}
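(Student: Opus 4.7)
The plan is to differentiate $u_m = \tilde{u}^{\otimes m}$ directly using the tensor-product version of the Leibniz (product) rule and then substitute the quadratic evolution equation from Lemma~\ref{lem:prop}. The observation is that $\frac{d}{dt}$ commutes with tensor products, so differentiating a tensor power simply spreads a derivative across each slot one at a time.

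Concretely, the first step is to write
\begin{equation}
    \frac{du_m}{dt} = \frac{d}{dt}\tilde{u}^{\otimes m} = \sum_{j=0}^{m-1}\tilde{u}^{\otimes j} \otimes \frac{d\tilde{u}}{dt}\otimes \tilde{u}^{\otimes(m-j-1)}.
\end{equation}
Next, I would substitute $\frac{d\tilde{u}}{dt} = A\tilde{u} + B\tilde{u}^{\otimes 2}$ into the $j$-th summand and distribute. This splits the sum into two pieces: one in which $A\tilde{u}$ occupies the $j$-th slot of an $m$-fold tensor, and one in which $B\tilde{u}^{\otimes 2}$ occupies slots $j$ and $j+1$ of an $(m+1)$-fold tensor.

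The identification step is then the algebraic identity
\begin{equation}
    \tilde{u}^{\otimes j}\otimes(A\tilde{u})\otimes \tilde{u}^{\otimes(m-j-1)} = \bigl(I^{\otimes j}\otimes A\otimes I^{\otimes(m-j-1)}\bigr)\tilde{u}^{\otimes m},
\end{equation}
together with the analogous identity with $B$ in place of $A$, acting on $\tilde{u}^{\otimes(m+1)}$ so that it produces $\tilde{u}^{\otimes j}\otimes(B\tilde{u}^{\otimes 2})\otimes\tilde{u}^{\otimes(m-j-1)}$. Summing over $j$ collapses the two pieces into $A_m u_m$ and $B_m u_{m+1}$ by the very definitions of $A_m$ and $B_m$, giving the claimed equation.

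There is no real obstacle here: the argument is essentially bookkeeping with tensor indices, and the main thing to be careful about is ensuring that the $(m-j-1)$ vs.\ $(m-j)$ ranks in each slot match up so that the sum indexing reproduces $A_m$ (a sum of $m$ terms acting on an $m$-fold tensor) and $B_m$ (a sum of $m$ terms acting on an $(m{+}1)$-fold tensor) as written. Once that indexing is verified, the identity is immediate and no analytic input beyond the classical Leibniz rule is needed.
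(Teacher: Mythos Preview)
Your argument is correct and is exactly the standard Leibniz-rule computation used to establish this identity; the paper itself does not give a proof but simply cites the result from the reference, so there is nothing further to compare. The only cosmetic point is that the paper's displayed definitions of $A_m$ and $B_m$ write $I^{\otimes(k-j-1)}$ where $I^{\otimes(m-j-1)}$ is clearly intended, and your index bookkeeping already reflects the correct reading.
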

The equation of Lemma \ref{carleman} may be rewritten as the following infinite dimensional block matrix equation,

$$\begin{bmatrix} 
    \frac{du_{1}}{dt}  \\
    \frac{du_{2}}{dt}  \\
    \frac{du_{3}}{dt}  \\
    \vdots \\
    \end{bmatrix} =
    \begin{bmatrix} 
    A_{1} & B_{1} &0 &0&\dots \\
     0 & A_{2} & B_{2} &0& \dots \\
     0 &  0 & A_{3} & B_{3} & \dots \\
    \vdots &      \vdots  &\vdots & \vdots& \ddots
    \end{bmatrix}\begin{bmatrix} 
    u_{1}  \\
    u_{2}  \\
    u_{3}  \\
    \vdots \\
    \end{bmatrix}.$$
We define our solution vector to the above equation as $u_\infty$ and the infinite-dimensional block matrix as $G$. Unless otherwise specified; we will here on out treat each block as a unit, rather than referring to specific coordinates of the matrix; thus, $G_{2,3} = B_2$, for example. As shown in \cite{https://doi.org/10.48550/arxiv.1711.02552}, this solves for each $u_m$ exactly. It remains to find an appropriate approximation of this method for $u_1$, which corresponds to the solution of our initial nonlinear equation. We proceed by a truncation,

$$\begin{bmatrix} 
    \frac{du_{1}}{dt}  \\
    \frac{du_{2}}{dt}  \\
    \frac{du_{3}}{dt}  \\
    \vdots \\
    \frac{du_m-1}{dt}  \\
    \frac{du_{m}}{dt}  \\
    \end{bmatrix} \approx
    \begin{bmatrix} 
    A_{1} & B_{1} &0 &0&\dots &0&0\\
     0 & A_{2} & B_{2} &0& \dots &0&0\\
     0 &  0 & A_{3} & B_{3} & \dots &0&0\\
    \vdots &      \vdots  &\vdots & \vdots& \ddots &&\\
    0& 0& 0& 0& \dots& A_{m-1} & B_{m-1}\\
   0& 0& 0& 0& \dots& 0 & A_{m}\\
    \end{bmatrix}
    \begin{bmatrix} 
    u_{1}  \\
    u_{2}  \\
    u_{3}  \\
    \vdots \\
    u_{m-1}  \\
    u_{m}  \\
    \end{bmatrix}.$$
We will use $G_m$ to describe the above truncation of $G$. Theorem 4.2 in \cite{https://doi.org/10.48550/arxiv.1711.02552} provides us an estimate of our error for $0  < t < \frac{1}{\|B\|}$; however, their estimate diverges with $m$ for $t>\frac{1}{\|B\|}$. As is common with approximation techniques, it is, therefore, necessary for us to divide our evolution time into multiple smaller time steps. 

We note that this requires us to update not only $u_1$ at a given step; but also $u_2$ to $u_m$, since our linear system is dependent on these variables. To ensure that all variables are appropriately updated, a larger matrix is required at each additional time step we add. The starting $m$ must therefore be sufficiently large to cover the required truncations at all steps. We will compute the required truncations for each step and the required starting $m$ within this section.

For convenience, we enumerate our time steps as $j = 1$ to $j=J$, where $j=1$ is the last time step. This is because the size of the matrix we consider is also shrinking as we approach the last time step. We define the following vector:
\begin{equation}
    v(j, m) := \{v_1(t_j),v_2(t_j),\ldots, v_{m}(t_j))\}
\end{equation}
as our approximation of 
\begin{equation}
   u(j, m) := \{u_1(t_j), u_2(t_j), \ldots, u_{m}(t_j)\} 
\end{equation}
Let 

\begin{equation}
    v(j):= v(j, m_j)
\end{equation}
\begin{equation}
    u(j):= u(j, m_j)
\end{equation}
$m_j$ corresponds to the value of the $m$ used for our truncated block matrix $G_m$ at step $j$. (The value of $m_j$ will be chosen later.) The error $\epsilon_j$ on our vector $v(j)$ is defined as 

\begin{equation}
    \epsilon_j := |u(j)-v(j)|.
\end{equation}
Define further our time interval leading to step $j$:
\begin{equation}
  \hat{t}_{j} := t_{j-1}-t_j  
\end{equation}
Let $u[m]$ be the truncation of the block vector $u$ to blocks $1$ to $m$. Similarly, let $M[m_1,m_2]$ be the truncation of the block matrix $M$ restricted to block rows $1$ to $m_2$ and block columns $1$ to $m_2$. To denote instead the exclusion of rows/columns $1$ to $m$, write $\bar{m}$.

We may remark here that on the Lebesgue space with Euclidean norm $L^2$, for row-finite $M$, the initial value problem $\frac{dx}{dt} = Mx$, $x(0)$ has a unique solution $x(t) = e^{Mt}x(0) = \sum_{j \in \mathbb{N}}\frac{M^jt^j}{j!}x(0)$; further, in the case of matrices generated by Carleman Linearization, this solution will be finite everywhere \cite{carl} \cite{banach}. 

Thus, the exact solution at a given time step, $u(j)$ will be described as:
\begin{equation}
\label{eq:uj}
u(j) = \left(e^{\hat{t}_{j+1}G}u(j+1, \infty)\right)[m_j] = \left(e^{\hat{t}_{j+1}G}[m_j,\infty]u(j+1, \infty)\right)
\end{equation}
Our approximation for the first $m_j$ values will be 

\begin{equation}
\label{eq:vj}
    v(j) = \left(e^{\hat{t}_{j+1}G_{m_{j+1}}}v(j+1)\right)[m_j] = \left(e^{\hat{t}_{j+1}G_{m_{j+1}}}[m_j,m_{j+1}]v(j+1)\right)
\end{equation}In the following lemma; we will define the length of each time step and bound the error introduced at each time step by our truncation. The remainder of the section will serve to prove these bounds. 

\begin{lemma} \label{lemma:time}
    Let $u(j)$, $v(j)$, $m_j$ be defined as in \eqref{eq:uj},\eqref{eq:vj} where $G_{m_j}$ is the truncated Carleman matrix for step $j$. Let
$m_j = \sum_{s=1}^{j-1}k_j +1
$ where 
$k_j = \log_{1+\delta}\left(\frac{1}{\epsilon}\right)+\log_{1+\delta}(j)$ and let $
|v(j+1)-u(j+1)| \leq \epsilon_{j+1}$ for some $\epsilon>0$, $\delta >0$ and all $j \leq w$  . Then there exists a constant $c$ such that:

\begin{enumerate}
    \item The truncation error at time step $j$ may be bounded by: $$\Big|e^{c\hat{t}_jG_{j+1}}v(j+1)|(m_j) - u(j)|\Big| \leq \frac{\epsilon}{j\delta^2} + \epsilon_{j+1},$$
where 
$$ \hat{t}_j = \frac{1}{jc(1+\delta)}.$$
\item Our evolution at time step $t_j$ is $\frac{\epsilon}{j\delta^2}$ close to a unitary matrix with respect to the operator norm. 
\item Given a time evolution with $w$ timesteps, the total error of the truncation process $\epsilon_{tot} = |v(0) - u(0)|$ may then be bounded by :
$$\epsilon_{tot} \leq \frac{\ln(w)\epsilon}{\delta^2}.$$
\end{enumerate}

\end{lemma}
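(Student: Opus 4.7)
The plan is to prove the three parts in sequence. Part~1 carries the main calculation, Part~2 supplies the operator-norm bound needed to keep $\epsilon_{j+1}$ from amplifying under evolution, and Part~3 follows by a straightforward induction once the other two are in hand.

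For Part~1, I would first split the one-step error into a propagation piece and a truncation piece by inserting the exactly-truncated initial data $u(j+1,m_{j+1})$:
\begin{equation*}
\bigl(e^{c\hat{t}_j G_{m_{j+1}}}v(j+1) - u(j)\bigr)[m_j] = e^{c\hat{t}_j G_{m_{j+1}}}\bigl(v(j+1) - u(j+1,m_{j+1})\bigr)[m_j] + \bigl(e^{c\hat{t}_j G_{m_{j+1}}}u(j+1,m_{j+1}) - e^{c\hat{t}_j G}u(j+1,\infty)\bigr)[m_j].
\end{equation*}
The propagation term is bounded by $\|e^{c\hat{t}_j G_{m_{j+1}}}\|\cdot\epsilon_{j+1}$, which yields the $\epsilon_{j+1}$ summand once Part~2 is in hand. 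For the truncation term I would exploit that $G$ is block upper bidiagonal, so that $(G^p x)_i$ depends only on $x_i,\ldots,x_{i+p}$. Applied to $u(j+1,\infty)-u(j+1,m_{j+1})$, which is supported only on blocks $> m_{j+1}$, the power series for $e^{c\hat{t}_j G}$ contributes to blocks $\leq m_j$ only starting from degree $p \geq k_j + 1$. Bounding each term using $\|A_m\|,\|B_m\| \leq m(\|A\|+\|B\|)$ with active block index $m \leq m_{j+1}+p$, the tail takes the form
\begin{equation*}
\sum_{p \geq k_j+1}\frac{(c\hat{t}_j)^p(m_{j+1}+p)^p c^p}{p!}\,\|u(j+1,\infty)\|.
\end{equation*}
The quantitative heart of the argument is to verify that the choice $\hat{t}_j = 1/(jc(1+\delta))$, together with $m_{j+1} = O(j k_j)$, makes $c\hat{t}_j(m_{j+1}+p)c$ bounded above by roughly $k_j/(1+\delta)$ in the tail regime; a Stirling estimate of $p!$ then produces geometric decay at rate $(1+\delta)^{-k_j}$, and substituting $k_j = \log_{1+\delta}(j/\epsilon) + O(\log(1/\delta))$ recovers the target $\epsilon/(j\delta^2)$.

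For Part~2, I would use that the anti-Hermitian part of $G$ lies on the block-diagonal (the $A_m$ blocks, which inherit anti-Hermiticity from $A$ under the unitary-evolution hypothesis), while the block-superdiagonal $B_m$ pieces only transfer amplitude between tensor-power sectors. Restricted to the physical submanifold on which $u_{k+1} = \tilde u \otimes u_k$ holds, $e^{c\hat{t}_j G_{m_{j+1}}}$ coincides with the unitary evolution of $\tilde u$ up to exactly the truncation discrepancy already estimated in Part~1, so its distance to a unitary is at most $\epsilon/(j\delta^2)$. Part~3 then follows by unrolling the recursion $\epsilon_j \leq \bigl(1 + \epsilon/(j\delta^2)\bigr)\epsilon_{j+1} + \epsilon/(j\delta^2)$: the multiplicative factor differs from $1$ only to second order in $\epsilon$, so the dominant contribution is the harmonic sum $\sum_{j=1}^w \epsilon/(j\delta^2) \leq \epsilon\ln(w)/\delta^2$.

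The main obstacle I expect is the tail estimate in Part~1. The block norms $\|A_m\|$ and $\|B_m\|$ grow linearly in $m$, the truncation depth $k_j$ is only logarithmic in $j/\epsilon$, and the time step $\hat{t}_j$ must be tuned precisely so that the Stirling estimate gives decay at rate $(1+\delta)^{-k_j}$ rather than some weaker rate --- this is what dictates the specific forms of $\hat{t}_j$ and $k_j$ stated in the lemma. Part~2 is also delicate because $G_{m_{j+1}}$ is not itself anti-Hermitian, so ``close to a unitary'' must be made rigorous by pinning down the subspace on which the bound holds, or equivalently by exhibiting a unitary dilation within the stated error.
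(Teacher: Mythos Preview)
Your overall architecture matches the paper's: split the one-step error into a propagation piece and a truncation tail, bound the tail, identify that tail with the distance to the exact unitary, then sum harmonically for Part~3. The two proofs diverge on the key technical step, however.

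For the tail estimate the paper does \emph{not} use your direct power-series-plus-Stirling argument. Instead it introduces the auxiliary scalar matrix $\hat G$ with $\hat G_{i,j}=\|G_{i,j}\|$, so that $\hat G$ is an infinite upper-bidiagonal matrix with entries $m\|A\|$ on the diagonal and $m\|B\|$ on the superdiagonal, and shows $\hat F_{i,j}\ge\|F_{i,j}\|$ by sub-additivity and sub-multiplicativity. It then diagonalizes $\hat G$ \emph{explicitly}: the eigenvalues are $k\|A\|$ and the eigenvector entries are binomial coefficients times powers of $\|B\|/\|A\|$. Exponentiating via $e^{SJS^{-1}}=Se^{J}S^{-1}$ and applying the binomial theorem yields the closed form
\[
\hat F_{m,m+k}(t)\ \le\ \binom{k+m-1}{k}\frac{\|B\|^{k}}{\|A\|^{k}}\,e^{mt\|A\|}\bigl(e^{\|A\|t}-1\bigr)^{k},
\]
and from this, together with $\binom{k+m-1}{k}\le (e(m+k-1)/k)^{k}$ and $e^{\|A\|t}-1\le (e-1)\|A\|t$ for $t\le 1/\|A\|$, the precise constant $c=\max\{\|A\|,\,e^{2}(e-1)\|B\|\}$ drops out. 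Your proposed bound $\sum_{p\ge k_j+1}(c\hat t_j)^{p}(m_{j+1}+p)^{p}c^{p}/p!$ is too coarse as written: the block indices along a length-$p$ path from row $m$ to column $m+k$ all lie in $[m,m+k]$, not in $[m,m_{j+1}+p]$, and you have not separated the $\|A\|$ and $\|B\|$ contributions (there are exactly $k$ factors of $B$ and $p-k$ factors of $A$ along each path), which is precisely the combinatorics the diagonalization packages for free. Your route can be made to work, but carrying it out carefully essentially reproduces the binomial identity the paper obtains by diagonalization, and without that separation you will not recover the stated constant $c$.

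On the propagation piece the paper is also cleaner than your sketch and avoids the circularity you flag. Rather than invoking Part~2 to control $\|e^{\hat t_jG_{m_{j+1}}}\|\cdot\epsilon_{j+1}$, it observes that because $G$ is block upper-triangular, $e^{tG_{m_{j+1}}}$ and the exact $e^{tG}$ agree on the $[m_j,m_{j+1}]$ block; the latter is genuinely norm-preserving, so its restriction is non-expanding and one gets the bare $\epsilon_{j+1}$ with no $(1+\epsilon/(j\delta^{2}))$ prefactor. Part~2 then becomes literally the statement that the discarded tail $\|F(\hat t_j)[m_j,\bar m_{j+1}]\|$ equals the distance to the exact unitary evolution, rather than an appeal to an anti-Hermitian structure of $G_{m_{j+1}}$ (which, as you correctly worried, it does not possess). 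Part~3 is the harmonic sum exactly as you wrote.
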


\begin{figure}
\centering
\begin{tikzpicture}
\node[draw] at (1,-1) {$F_j$};
\node[draw] at (3.5,-1) {$F_{j+1}[m_j,m_{j+1}]$};

\node[draw] at (7.2,-1) {$F[m_j,\bar{m}_{j+1}]$};

\node[draw] at (7.2,-5) {$F$};
\node[draw] at (3,-3) {$F_{j+1}$};

\node[draw] at (9.5,-1) {$v_j$};

\node[draw] at (9.5,-3) {$v_{j+1}$};

\draw [densely dotted](0.1,-0.1) -- (1.9,-0.1) -- (1.9,-1.9) -- (0.1,-1.9) -- (0.1,-0.1);

\draw (0,0) -- (6,0) -- (6,-2) -- (0,-2) -- (0,0);

\draw [dotted](6,0) -- (8, 0);

\draw [loosely dotted](0.1,-0.) -- (8, -8);

\draw [dotted](6,-2) -- (8, -2);

\draw [densely dotted](0,0) -- (6,0) -- (6,-6) -- (0,-6) -- (0,0);

\draw [densely dotted](0,0) -- (6,0) -- (6,-6) -- (0,-6) -- (0,0);

\draw [densely dotted](9.1,-0.1) -- (9.9,-0.1) -- (9.9,-1.9) -- (9.1,-1.9) -- (9.1,-0.1);

\draw (9,0) -- (10,0) -- (10,-6) -- (9,-6) -- (9,0);

\draw [densely dotted](10,-6) -- (10,-8);

\draw [densely dotted](9,-6) -- (9,-8);

\end{tikzpicture}

\caption{Illustration of 1 step of our process. $F_{j+1}$ restricted to the space of $m_j$  is applied onto $v_{j+1}$ to return $v_j$. } \label{fig:M1}
\end{figure}
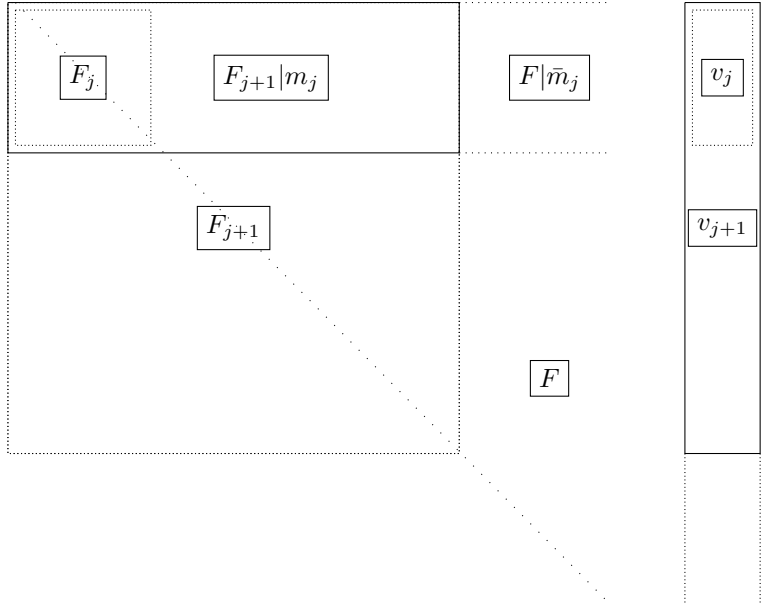

\begin{proof}

Let 

\begin{equation}
    F(t):= e^{tG},
\end{equation} and 

\begin{equation}\label{eq:fj}
    F_j(t):=e^{tG_j}.
\end{equation} We note that $F(t)$ is identical to $F_j(t)$ on all coordinates where $F_j(t)$ is defined, due to $G$ being block upper triangular, in particular:

\begin{equation}
    F_{j+1}(t)[m_j,m_{j+1}] = F(t)[m_j,m_{j+1}]
\end{equation}

Our error on $v_j$ is a combination of carrying forward the existing error on $v_{j+1}$ relative to $u_{j+1}$; and the tail of $F$ that is thrown away. (See  Figure  \ref{fig:M1}). $m_{j+1}$ is chosen to be sufficiently large to keep the latter error small. Thus we may bound our error vector as:
\begin{equation}
    \|v(j)-u(j)\| 
    = \|F_{j+1}(\hat{t}_j)[m_j,m_{j+1}]v(j+1) - F(\hat{t}_j)[m_j,\infty]u(j+1)\| 
\end{equation}
\begin{equation}
    =\|F_{j+1}(\hat{t}_j)[m_j,m_{j+1}]v(j+1) - (F(\hat{t}_j)[m_j,m_{j+1}]u(j+1)  + F(\hat{t}_j)[m_j,\bar{m}_{j+1}]u(j+1))\| 
\end{equation}
\begin{equation}
 \leq \|F(\hat{t}_j)[m_j,m_{j+1}]v(j+1) - F(\hat{t}_j)[m_j,m_{j+1}]u(j+1)\| + \|(F(\hat{t}_j)[m_j,\bar{m}_{j+1}]u(j+1)\|
\end{equation}
Now, as $F(\hat{t}_j)$ is a norm-preserving linear operator, the restriction $F(\hat{t}_j)[m_j,m_{j+1}]$ will be monotonically non-increasing. Further, $\|u(j)[\bar{m}_{j+1}]\| \leq \|u(j)\| = 1$. Thus we obtain:
\begin{equation}
\label{eq:ej}
    \|v(j)-u(j)\| 
\leq \epsilon_{j+1} + \|F(\hat{t}_j)[m_j,\bar{m}_{j+1}]\|.
\end{equation}
We note further that $F(\hat{t}_j)[m_j,\bar{m}_{j+1}]$ corresponds to the distance of our evolution from a unitary - as the exact evolution is unitary. Let us define $k_j$, representing the additional copies in our block vector required from the previous step $j+1$ to approximate our vector at step $j$:

\begin{equation}
    k_j := m_{j+1}-m_j.
\end{equation}
 As previously, we treat $F$ as a block matrix, acting on $\{v_1, v_2, v_3, ...\}$; $F_{l,k}$ will refer to the block at coordinates $(l,k)$ in this block matrix. 
 The tail of our truncation, $F(\hat{t}_j)[m_j,\bar{m}_{j+1}]$, may be bounded as:

\begin{equation}
\label{eq:tail}
    \big\|F(\hat{t}_j)[m_j,\bar{m}_{j+1}]\big\| \leq \sum_{m = 1}^{m_j}\sum_{s = k}^{\infty} \|F_{m,m_{j}+s}(\hat{t}_j)\|.
\end{equation}

We note here that $m_{j}+k = m_{j+1}$. We will show that $\|F_{m_j,m_j+k}(\hat{t}_j)\|$ decays exponentially in $k$ and $(m_j-m)$ for the correct choice of $m_j$, $\hat{t}_j$ and $k_j$, thus allowing us to bound this summation as a geometric series.

We define a new matrix $\hat{G}$ as : 

\begin{equation}
    \hat{G}_{i,j} := \|G_{i,j}\|
\end{equation}

and define $\hat{F}$ as $e^{\hat{G}}$. An important property here is that: 

\begin{equation}
\label{fhat}
    \hat{F}_{i,j} \geq \|F_{i,j}\|
\end{equation}

This is a direct consequence of the sub-additive and sub-multiplicative properties of the operator norm: $\|A+B\| \leq \|A\|+ \|B\|$, $\|AB\| \leq \|A\|\|B\|$. $F_{i,j}$ will be expressed as:

\begin{equation}
    F_{i,j}
 = \left(\sum_{k = 0}^\infty \frac{G^k}{k!}\right)_{i,j} = \left(\sum_{k = 0}^\infty \frac{(G^k)_{i,j}}{k!}\right)
\end{equation}

\begin{equation}
\|(G^k)_{i,j}\|
 = \|\sum_{a_1 \in \mathbb{N}}\sum_{a_1 \in \mathbb{N}}... \sum_{a_{k-1} \in \mathbb{N}}G_{i,a_1}G_{a_1,a_2}...G_{a_{k-2},a_{k-1}}G_{a_{k-1},j}\|
\end{equation}

\begin{equation}
\|(G^k)_{i,j}\|
\leq \sum_{a_1 \in \mathbb{N}}\sum_{a_1 \in \mathbb{N}}... \sum_{a_{k-1} \in \mathbb{N}}\|G_{i,a_1}\|\|G_{a_1,a_2}\|...\|G_{a_{k-2},a_{k-1}}\|\|G_{a_{k-1},j}\| = (\hat{G}^k)_{i,j}
\end{equation}

\begin{equation}
    \|F_{i,j}\|
 = \|\sum_{k = 0}^\infty \frac{(G^k)_{i,j}}{k!}\| \leq \sum_{k = 0}^\infty \frac{\|(G^k)_{i,j}\|}{k!} \leq \sum_{k = 0}^\infty \frac{(\hat{G}^k)_{i,j}}{k!} = \hat{F}_{i,j}.
\end{equation}
Using $\|A_m\|\leq m\|A\|$, $\|B_m\|\leq m\|B\|$, we may diagonalize our matrix $\hat{G}$ as:

\begin{equation}
    \hat{G} = \begin{bmatrix} 
    \|A\| & \|B\| &0 &0&\dots \\
     0 & 2\|A\| & 2\|B\| &0& \dots \\
     0 &  0 & 3\|A\| & 3\|B\| & \dots \\
    \vdots &      \vdots  &\vdots & \vdots& \ddots
    \end{bmatrix} = SJS^{-1},
\end{equation}

    \begin{equation}
      \hat{G} = \begin{bmatrix} 
    1 & \frac{\|B\|}{\|A\|} &\frac{\|B\|^2}{\|A\|^2} &\frac{\|B\|^3}{\|A\|^3}&\dots \\
     0 & 1 & \frac{2\|B\|}{\|A\|} &\frac{3\|B\|^2}{\|A\|^2}& \dots \\
     0 &  0 & 1 & \frac{3\|B\|}{\|A\|} & \dots \\
    \vdots &      \vdots  &\vdots & \vdots& \ddots
    \end{bmatrix}\begin{bmatrix} 
    \|A\| & 0 &0 &0&\dots \\
     0 & 2\|A\| & 0 &0& \dots \\
     0 &  0 & 3\|A\| & 0 & \dots \\
    \vdots &      \vdots  &\vdots & \vdots& \ddots
    \end{bmatrix}\begin{bmatrix} 
    1 & \frac{-\|B\|}{\|A\|} &\frac{\|B\|^2}{\|A\|^2} &\frac{-\|B\|^3}{\|A\|^3}&\dots \\
     0 & 1 & \frac{-2\|B\|}{\|A\|} &\frac{3\|B\|^2}{\|A\|^2}& \dots \\
     0 &  0 & 1 & \frac{-3\|B\|}{\|A\|} & \dots \\
    \vdots &      \vdots  &\vdots & \vdots& \ddots
    \end{bmatrix}  
    \end{equation}
    Where $S[m,m+k] = {m+ k -1 \choose k}\left(\frac{\|B\|}{\|A\|}\right)^k$, $J[m,m] = m\|A\|$ and $S^{-1}[m,m+k] = {m+ k -1 \choose k}\left(\frac{-\|B\|}{\|A\|}\right)^k$. Inserting the constant factor of $t$, we obtain the same $S, S^{-1}$ and $J[m,m] = tm\|A\|$.

Let's now demonstrate this diagonalization. It is clear that the eigenvalues of our infinite matrix $\hat{G}$ are $k\|A\|$ for all strictly positive integers $k$; we compute our corresponding eigenvectors from this. For some fixed eigenvalue $k\|A\|$,

\begin{equation}
    j\|A\|v_k[j]+ j\|B\|v_k[j+1] = k\|A\|v_k[j]
\end{equation}

\begin{equation}
    v_k[j+1] = \frac{(k-j)\|A\|v_k[j]}{j\|B\|}
\end{equation}
Setting $v_k[1]$ to $\frac{\|B\|^{k-1}}{\|A\|^{k-1}}$:

\begin{equation}
    \forall j \leq k: v_k[j] = {k \choose k-j}\frac{\|B\|^{k-j}}{\|A\|^{k-j}}
\end{equation}

\begin{equation}
    \forall j > k: v_k[j] = 0
\end{equation}
Which gives us the desired diagonalization. 

Given some diagonalization $SJS^{-1}$, we know that $e^{SJS^{-1}} = Se^JS^{-1}$. Making use of this fact, we may then compute the non-zero entries of $\hat{F}(t) = e^{\hat{G}t}$ as follows,

\begin{equation}
\hat{F}_{m,m+k}(t) \leq \sum_{j=m}^{m+k} {m+(j-m)-1\choose j-m}\frac{\|B\|^{j-m}}{\|A\|^{j-m}}e^{jt\|A\|}(-1)^{j+m+k-1}{m+k-1 \choose m+k -j}\frac{\|B\|^{m+k-j}}{\|A\|^{m+k-j}}
\end{equation}
\begin{equation}
    \hat{F}_{m,m+k}(t) \leq {k+m-1 \choose k}\frac{\|B\|^k}{\|A\|^k}\sum_{j=m}^{m+k} {k \choose j-m}e^{jt\|A\|}(-1)^{j+m+k-1}
\end{equation}
\begin{equation}
    \hat{F}_{m,m+k}(t) \leq {k+m-1 \choose k}\frac{\|B\|^k}{\|A\|^k}e^{mt\|A\|}\sum_{j=0}^{k} {k \choose j}e^{jt\|A\|}(-1)^{k-j}
\end{equation}
Using here, the binomial theorem:
\begin{equation}
\label{eq:bound}
    \hat{F}_{m,m+k}(t) \leq {k+m-1 \choose k}\frac{\|B\|^k}{\|A\|^k}e^{mt\|A\|}(e^{\|A\|t}-1)^k
\end{equation}
The binomial coefficients that appear above can be upper bounded by elementary functions via:

\begin{equation}
\label{eq:app}
    {k+m-1 \choose k} \leq \left(\frac{e(m+k-1)}{k}\right)^k
\end{equation}
We establish here that $\hat{t}_j \leq \frac{1}{\|A\|}$ for all j (we will force this later). Then:

\begin{equation}
    (e^{\|A\|\hat{t}_j}-1)^k \leq (\|A\|\hat{t}_j \times (e-1))^k
\end{equation}
We will further have that \begin{equation}
    m_j \times \hat{t}_j\|A\| \leq k_j
\end{equation}
Thus: 
\begin{equation}
    e^{m_j\|A\|\hat{t}_j} \leq e^{k_j}.
\end{equation}
Set the constant $c$ such that $c = \max\{\|A\|, e^{2}(e-1)\|B\|\}$. (Note here that $e^{2}(e-1) < 12.7$.) Then:

\begin{equation}
\label{eq:exp}
    \|F_{m,m+k_j}(\hat{t}_j)\| \leq \hat{F}_{m,m+k_j}(\hat{t}_j) \leq \left(\frac{\hat{t}_jc(m_j+k_j-1)}{k_j}\right)^{k_j}.
\end{equation}
Further, for $k \geq k_j$ we have that

\begin{equation}
\label{eq:exp2}
    \|F_{m,m+k}(\hat{t}_j)\| \leq \left(\frac{\hat{t}_jc(m_j+k-1)}{k}\right)^{k}.
\end{equation}
 holds as well.

For a given step $j$; we will now define

\begin{equation}
\label{eq:kj}
 k_j = \log_{1+\delta}\left(\frac{1}{\epsilon}\right)+\log_{1+\delta}(j), 
\end{equation}
for some chosen $\delta$. We remember that $m_j$ represents the number of elements in our block vector $v(j)$ at step $j$; our final block vector at step 1 being of size 1 (we remember here that the index $j$ runs in the opposite direction of our time, for convenience) and thus,

\begin{equation}
    m_j = 1 + \sum_{s=1}^{j-1} k_j = \sum_{s=1}^{j-1} k_j\log_{1+\delta}\left(\frac{1}{\epsilon}\right)+\log_{1+\delta}(j),
\end{equation}
\begin{equation}
\label{eq:mj}
    m_j\leq j(\log_{1+\delta}\left(\frac{1}{\epsilon}\right)+\log_{1+\delta}(j)).
\end{equation}
We then chose the size of our time step; defined as previously $\hat{t}_j = t_{j}-t_{j+1}$.

\begin{equation}
\label{eq:tj}
    \hat{t}_j = \frac{1}{jc(1+\delta)}.
\end{equation}
This respects all the bounds required previously. Find the inequality

\begin{equation}
    \|F_{m_j, m_j+k_j}(\hat{t}_j)\| \leq \left(\frac{1}{1+\delta}\right)^{\log_{1+\delta}\left(\frac{1}{\epsilon}\right)+\log_{1+\delta}(j)}= \frac{\epsilon}{j}.
\end{equation}
From equations \ref{eq:exp2} and \ref{eq:kj}, we may further obtain the bound:

\begin{equation}
\label{eq:comb1}
    \|F_{m_j, m_j+k_j+r}(\hat{t}_j)\| \leq \left(\frac{1}{1+\delta}\right)^r\frac{\epsilon}{j}.
\end{equation}
Additionally:

\begin{equation}
\label{eq:comb2}
    \|F_{m_{j}-s, m_{j}-s+k_{j}}(\hat{t}_j)\|\leq \left(\frac{\frac{1}{jc(1-\delta)}c(m_{j}-s+k_j-1}{k_j}\right)^{k_j} \leq \|F_{m_j, m_j+k_j}(\hat{t}_j)\|.
\end{equation}
From equation \ref{eq:comb1} and equation \ref{eq:comb2}, we obtain: 

\begin{equation}
    \sum_{m = 1}^{m_j}\sum_{s = k_j}^{\infty} \|F_{m,m_j+s}(\hat{t}_j)\| \leq
\sum_{m = 1}^{m_j}\sum_{s = k}^{\infty} 
\left(\frac{1}{1+\delta}\right)^{m_j-m}
\left(\frac{1}{1+\delta}\right)^{s-k+1}\frac{\epsilon}{j} \leq  \frac{\epsilon}{j\delta^2}.
\end{equation}
As this represents our distance from the exact operator, which is known to be unitary, our evolution matrix has a distance of at most $\frac{\epsilon}{j\delta^2}$ from a unitary. Combining the previous equation with equations \ref{eq:ej} and \ref{eq:tail} we may obtain:

\begin{equation}
    |u(j)-v(j)| \leq \epsilon_{j+1} + \frac{\epsilon}{j\delta^2}.
\end{equation}

We may use this equation to describe our error over the evolution. Suppose that we require $w$ steps to simulate our entire evolution. Setting $\epsilon_{N+1}$ to $0$ (as this is our initial state), the total truncation error $\epsilon_{tot}$ may be described as follows:

\begin{equation}
    \epsilon_{tot} \leq \sum_{j = 1}^{w} \frac{\epsilon}{j\delta^2} \leq \frac{\ln(w)\epsilon}{\delta^2}.
\end{equation}

\end{proof}

We now have established the size of our time steps for our linearization process and the error introduced by our truncation at each time step. It remains to simulate these linear equations.

\subsection{Proof of Theorem \ref{theorem:1}}
\label{sec:quantum}

We wish to simulate our equation for some time $T$. Using our bounds from Lemma \ref{lemma:time}, we may evolve our vector by $\hat{t}_j = \frac{1}{jc(1+\delta)}$ at time step $j$. Thus, to simulate our entire evolution, we require $w$ time steps, such that

\begin{equation}
    T \leq \sum_{j=1}^w \frac{1}{jc(1+\delta)}.
\end{equation}
But we know that

\begin{equation}
    \sum_{j=1}^w \frac{1}{jc(1+\delta)} \geq \frac{\log(w)}{c(1+\delta)}.
\end{equation}
Thus, a sufficient condition is that,

\begin{equation}
\label{eq:w}
   w \geq e^{Tc(1+\delta)}.
\end{equation}
We will simulate our evolution using a Linear Combination of Unitaries (LCU) method; applied upon a Taylor expansion of $e^{Gt}$. To evaluate our running time, it is first necessary to ensure that the norm of the matrices that we are simulating remains small. This requires us to splice our time steps further for our previously fixed matrices. We note that this time splicing does \emph{not} require us to expand our matrix further; as our matrix at each time step has already been determined to be sufficiently accurate for that time step.

\begin{lemma}
   Define: $\tau_j = \frac{\hat{t}_j}{2(\log_{1+\delta}\left(\frac{1}{\epsilon}\right)+\log_{1+\delta}(j))}.$   
Then for any given $F_j$ as defined in equation \ref{eq:fj}, the following inequality holds:
   
   $$\|F_j(\tau_j)\| \leq e.$$ 
\end{lemma}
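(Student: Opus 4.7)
The plan is to show that the truncated generator $G_{m_j}$ has operator norm bounded by roughly $2m_j c$, and that our choice of $\tau_j$ was calibrated precisely so that $\tau_j\|G_{m_j}\|\le 1/(1+\delta)\le 1$, after which $\|F_j(\tau_j)\|\le e^{\|\tau_j G_{m_j}\|}\le e$ is immediate.

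First I would bound $\|G_{m_j}\|$. Recall that $G_{m_j}$ acts on a block vector $(v_1,\ldots,v_{m_j})$ by $(G_{m_j}v)_i=A_i v_i+B_i v_{i+1}$ (with the convention that $B_{m_j}v_{m_j+1}=0$). Writing $G_{m_j}=D+N$ where $D$ is the block diagonal containing the $A_i$ and $N$ is the super-diagonal containing the $B_i$, the triangle inequality gives
\begin{equation*}
\|G_{m_j}\|\;\le\;\|D\|+\|N\|\;\le\;\max_{1\le i\le m_j}\|A_i\|+\max_{1\le i\le m_j-1}\|B_i\|\;\le\;m_j\|A\|+m_j\|B\|,
\end{equation*}
where the last step invokes the bounds $\|A_m\|\le m\|A\|$ and $\|B_m\|\le m\|B\|$ derived earlier. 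Since $c=\max\{\|A\|,e^{2}(e-1)\|B\|\}\ge\max\{\|A\|,\|B\|\}$, we get $\|G_{m_j}\|\le 2m_j c$.

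Next I would combine this with the standard submultiplicative estimate $\|e^{\tau_j G_{m_j}}\|\le e^{\tau_j\|G_{m_j}\|}$ and plug in the definitions. Using $m_j\le j(\log_{1+\delta}(1/\epsilon)+\log_{1+\delta}(j))$ from \eqref{eq:mj} and $\hat{t}_j=\frac{1}{jc(1+\delta)}$ from \eqref{eq:tj}, we compute
\begin{equation*}
\tau_j\|G_{m_j}\|\;\le\;\frac{\hat{t}_j}{2(\log_{1+\delta}(1/\epsilon)+\log_{1+\delta}(j))}\cdot 2m_j c\;\le\;\hat{t}_j\cdot jc\;=\;\frac{1}{1+\delta}\;\le\;1.
\end{equation*}
Therefore $\|F_j(\tau_j)\|=\|e^{\tau_j G_{m_j}}\|\le e^{1/(1+\delta)}\le e$, as claimed.

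No step here is a real obstacle; the lemma is essentially a verification that the definition of $\tau_j$ was engineered to absorb the $m_j$-dependent growth of $\|G_{m_j}\|$. The only care needed is in the block norm bound, which is handled by splitting $G_{m_j}$ into its diagonal and super-diagonal parts and noting that each of these is itself a block-diagonal or block-permutation-like matrix whose operator norm equals the maximum block norm.
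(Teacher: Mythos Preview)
Your proposal is correct and takes essentially the same approach as the paper: bound the generator's operator norm by $m_j(\|A\|+\|B\|)\le 2m_jc$ and then use $\|e^{\tau_j G_{m_j}}\|\le e^{\tau_j\|G_{m_j}\|}$ together with the calibration of $\tau_j$ to force the exponent below $1$. The only cosmetic difference is that the paper passes through the nonnegative majorant $\hat G$ and a row-sum bound, whereas you obtain the same estimate on $\|G_{m_j}\|$ more directly via the block-diagonal/super-diagonal split $D+N$.
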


\begin{proof}
   We note here that

   \begin{equation}
      \|F_j(\tau_j)\| = \|G_j(\tau_j)\|
   \end{equation}
 \begin{equation}
      \|F_j(\tau_j)\| \leq \left\|\exp\left(\tau_j\begin{bmatrix} 
    \|A\| & \|B\| &0 &0&\dots &0 \\
     0 & 2\|A\| & 2\|B\| &0& \dots &0 \\
     0 &  0 & 3\|A\| & 3\|B\| & \dots &0 \\ 
    \vdots &      \vdots  &\vdots & \vdots& \ddots & \vdots\\ 0& 0 & 0 & 0 & \dots &m_j\|A\|
    \end{bmatrix}\right)\right\|
   \end{equation}
Then, as our last matrix is strictly positive, we may bound the norm of the matrix exponential by the norm of the exponential of the norm of the largest row,  thus,

\begin{equation}
    \|F_j(\tau_j)\| \leq e^{\tau_j m_j (\|A\|+\|B\|)}.
\end{equation}
As $\hat{t}_j \leq \frac{1}{jc}$, $m_j \leq j(\log_{1+\delta}\left(\frac{1}{\epsilon}\right)+\log_{1+\delta}(j))$ and $c \leq \frac{\|A\| + \|B\|}{2}$, the conclusion follows.
\end{proof}

We note (see  Figure \ref{fig:M2}) that our matrix $F_{j+1}[m_j,m_{j+1}]$ may be represented as

\begin{equation}
    e^{G_{j+1}\tau_j}-e^{(G_{j+1}-G_j)\tau_j} + I.
\end{equation}

\begin{figure}
\centering
\begin{tikzpicture}
    \draw (0,0) -- (7,0) -- (7,-2) -- (2,-2) -- (0,0);

    \draw [dotted](-0.22,0.1) -- (7.1,0.1) -- (7.1,-7.25) -- (-0.2,0.1);

    \draw [dotted](2.1,-2.1) -- (7,-2.1) -- (7, -7) -- (2.1,-2.1) ;

    \node[draw] at (4,-1) {$F_{j+1}[m_j,m_{j+1}]$};

    \node[draw] at (5.5,-3.5) {$e^{(G_{j+1}-G_j)\tau_j}$};
    \node[draw] at (2,-4) {$e^{G_{j+1}\tau_j}$};

\end{tikzpicture}
\caption{Computing $F_{j+1}[m_j,m_{j+1}]$ from $e^{G_{j+1}\tau_j}$ and $e^{(G_{j+1}-G_j)\tau_j}$} \label{fig:M2}
\end{figure}
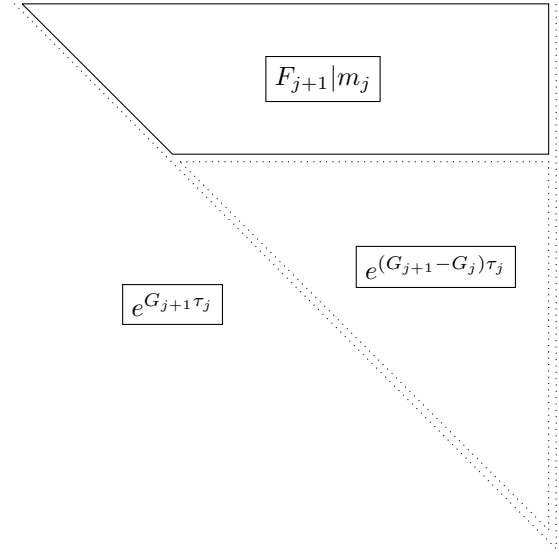
We may then describe our evolution matrix as: 

\begin{equation}
    \sum_{k=0}^\infty \frac{(G_{j+1}^k-(G_{j+1}-G_j)^k )\tau_j^k}{k!} + I = \sum_{k=0}^K \frac{(G_{j+1}^k-(G_{j+1}-G_j)^k)\tau_j^k}{k!}+I + R_K
\end{equation}
Next, note that $\|G_{j+1}\|\le \max_j \|G_j\|$ and by triangle ,inequality. $\|G_{j+1} - G_j\| \le 2\max_j \|G_j\|$.  We then have
\begin{equation}
 \left\|R_k\right\| \le \sum_{k=K+1}^\infty \frac{(3\max_j\|G_j\|\tau_j)^k}{k!} \in O\left(\frac{(\max_j \|G_j\| \tau_j)^K}{K!} \right).
\end{equation}
After the use of Stirling's inequality, we obtain that
\begin{equation}
 \left\|R_k\right\|  \in O\left(\frac{(e\max_j \|G_j\| \tau_j)^K}{K^K} \right).
\end{equation}
Solving for $\|R_k\|=\epsilon^*$ for some desired error target $\epsilon^*$ yields a Lambert W function whose asymptotic form then reveals that it suffices to choose a value of $K$ such that
\begin{equation}
    K = O\left(\frac{\log(\max_j \|G_j\| \tau_j/\epsilon^*)}{\log\log(\max_j \|G_j\| \tau_j/\epsilon^*)}\right),
\end{equation}
to obtain a $\epsilon^*$ approximation of our matrix.

Next we will represent our $G_j$ and $G_{j+1}-G_j$ as sums of unitaries of the form:
$\sum_{l=0}^L\alpha_l U_l$. We note that $A$ and $B$ may be decomposed into $O(d_A^2|A|\log N)$ unitaries and $O(d_B^2|B| \log N)$ unitaries respectively. This is implemented using edge coloring edge-coloring (where here $N = 2^n$ is our dimension; see \cite{Berry_2006}), and each unitary may be implemented using $O(1)$ copies of $O_A$ or $O_B$. Then, as it's easy to construct $G_j$ and $G_{j+1}-G_j$ from $A$ and $B$ using our definition, we can see that the $L$ we require may be bounded by $O(m_j(d_A^2|A| + d_B^2|B|))$. Now; the sum of unitaries may be implemented using the well-known Linear Combination of Unitaries (LCU) results; we will use here a robust version from \cite{7354428}:

\begin{lemma}[Lemma 4 of \cite{7354428}]
\label{lemma:LCU}
    Let $\Vec{U} = (U_1, . . . , U_M )$ be unitary operations and let $V = \sum_{m=1}^Ma_mU_m$ be $\Bar{\delta}$-close to a unitary. Let $a := \sum_{m =1}^M |a_m|$. We can approximate $V$ to within $O(\Bar{\delta})$ using $O(a)$ select $U$ and select $\Bar{U}$ operations.
\end{lemma}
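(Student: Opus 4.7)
The plan is to prove this via the standard LCU "prepare--select--unprepare" construction followed by robust oblivious amplitude amplification. First, I would define a prepare unitary $B$ on $\lceil \log_2 M \rceil$ ancilla qubits such that $B\ket{0} = \frac{1}{\sqrt{a}}\sum_{m=1}^M \sqrt{|a_m|}\,e^{i\arg(a_m)}\ket{m}$, together with the controlled select operation $\mathrm{SEL}(U) = \sum_{m=1}^M \ketbra{m}{m}\otimes U_m$. A direct calculation shows
\begin{equation}
(B^\dagger\otimes I)\,\mathrm{SEL}(U)\,(B\otimes I)\ket{0}\ket{\psi} \;=\; \frac{1}{a}\ket{0}\otimes V\ket{\psi} + \ket{\Phi^\perp},
\end{equation}
where $\ket{\Phi^\perp}$ is supported on ancilla states orthogonal to $\ket{0}$. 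One round of this circuit therefore already produces $V\ket{\psi}$ on the flagged subspace with amplitude $1/a$, but naive post-selection succeeds only with probability $\|V\ket{\psi}\|^2/a^2$.

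Second, I would boost this amplitude to $\Theta(1)$ via oblivious amplitude amplification (OAA). The required reflections are $2\ketbra{0}{0}-I$ on the ancilla register, which is gate-trivial, and a reflection about the output state built from one application of the above circuit together with its inverse; the inverse in particular uses $\mathrm{SEL}(U^\dagger) = \mathrm{SEL}(\Bar{U})$, which is why both select-$U$ and select-$\Bar{U}$ oracles appear in the query count. In the exactly unitary case ($V = aW$ with $W$ unitary), the standard OAA analysis restricted to the two-dimensional invariant subspace spanned by $\ket{0}\otimes W\ket{\psi}$ and its orthogonal complement shows that $\Theta(a)$ Grover iterates rotate the state into the flagged subspace up to arbitrary precision, yielding an input-oblivious implementation of $W$ with $O(a)$ calls to $\mathrm{SEL}(U)$ and $\mathrm{SEL}(\Bar{U})$.

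The main obstacle is extending this to the $\Bar{\delta}$-approximate regime, since the invariant-subspace argument used in standard OAA depends critically on the amplitude $\|V\ket{\psi}\|/a$ being exactly $1/a$, whereas here it fluctuates by $O(\Bar{\delta}/a)$ across inputs. To handle this, I would invoke the robust version of OAA: the key technical estimate is that each Grover iterate, when fed a non-isometric $V$, deviates from its ideal action by $O(\Bar{\delta}/a)$ in operator norm, and these per-round errors add linearly by the triangle inequality across the $\Theta(a)$ iterates required for amplification. The cumulative error is therefore $O(\Bar{\delta})$, giving an operator within $O(\Bar{\delta})$ of $V$ as claimed, with total query complexity $O(a)$ to $\mathrm{SEL}(U)$ and $\mathrm{SEL}(\Bar{U})$, and crucially this bound is uniform over $\ket{\psi}$.
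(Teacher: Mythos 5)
The paper does not actually prove this statement: it is imported verbatim as Lemma~4 of the cited reference \cite{7354428} and used as a black box, so there is no in-paper proof to compare against. Your argument is, in substance, the standard proof from that reference --- block-encode $V/a$ by prepare--select--unprepare and then boost the $1/a$ amplitude with robust oblivious amplitude amplification, with select-$\bar U$ entering through the inverse walk operator in each Grover iterate --- and the query count $O(a)$ follows correctly from the $\Theta(a)$ iterates needed to rotate an amplitude of $1/a$ up to $\Theta(1)$.

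Two steps are asserted rather than established. First, $\Theta(a)$ Grover iterates do not generically land the state exactly in the flagged subspace: unless $1/a = \sin\bigl(\pi/(2(2k+1))\bigr)$ for an integer $k$, the rotation over- or under-shoots, and one must pad the linear combination (e.g.\ with a multiple of the identity on an extra ancilla) to tune the amplitude before the iteration count and the final error can be controlled; this is part of the construction in the cited reference and should appear in yours. Second, the claim that each iterate deviates from its ideal action by $O(\bar{\delta}/a)$ in operator norm is the entire technical content of robust OAA and is not self-evident: the implemented circuit is exactly unitary, so the ``deviation'' has to be measured against an idealized iterate built from a genuine unitary $V'$ with $\|V - V'\| \le \bar{\delta}$, and one must bound the per-step leakage out of the two-dimensional invariant subspace (the block-encoding of $V$ differs from that of $V'$ not only in the flagged component but also in the garbage component $\ket{\Phi^\perp}$). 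As written, this step is a citation to robust OAA rather than a derivation of it, which is acceptable only because the lemma itself is already a citation.
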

We may now proceed with our proof of Theorem \ref{theorem:1}, which we restate below for convenience.
 
\thmquantum*
\begin{proof}[Proof of Theorem~\ref{theorem:1}]

We note here that our select operations may be implemented using $O(M)$ calls to our oracles. The total evolution cost for our LCU is then:

\begin{equation}
\label{equ:LCU}
    O(M\times a) = O(LK\times 1) = O\left(\frac{m_j(d_A^2\|A\| + d_B^2\|B\|)\log(1/\epsilon^*)n}{\log\log(1/\epsilon^*)}\right).
\end{equation}
It remains to establish our choices of $\epsilon^*$ produced by the approximate simulation of our $F_{j+1}|m_j$, as well as the $\epsilon_j$ distance between the simulated quasi-unitaries and the nonlinear dynamics, to ensure that our total error remains small throughout the evolution. 

Define a new variable $v_{sim}(j)$ to represent our simulated value of $v_j$ at time $t_j$.
Within a given $\hat{t}_j$, we have $2(\log_{1+\delta}\left(\frac{1}{\epsilon}\right)+\log_{1+\delta}(j))$ substeps. Let

\begin{equation}
    \label{equ:star}
    \epsilon^* = \frac{\epsilon}{j\delta^2}.
\end{equation}
By triangle inequality, we are then $\frac{2\epsilon}{j\delta^2}$ close to a unitary. Lemma \ref{lemma:LCU} will then allow us to simulate approximation of $F_{j+1}|m_j$ within $O(\frac{\epsilon}{j\delta^2})$. Given that our error accumulates approximately linearly when small, our simulation will have that

\begin{equation}
    |v_{sim}(j)-u(j)| = O\left(\frac{(\log_{1+\delta}\left(\frac{1}{\epsilon}\right)+\log_{1+\delta}(j))\epsilon}{j\delta^2}\right) + |v_{sim}(j+1)-u(j+1)|.
\end{equation}
The error from the entire evolution may then be described as

\begin{equation}
    |v_{sim}(0)-u(0)| = O\left(\sum_{j=1}^w\frac{(\log_{1+\delta}\left(\frac{1}{\epsilon}\right)+\log_{1+\delta}(j))\epsilon}{j\delta^2}\right),
\end{equation}

\begin{equation}
    |v_{sim}(0)-u(0)| = O\left(\frac{T^2\epsilon}{\delta^2}\right).
\end{equation}
For a desired error of $E$, it is then sufficient to set

\begin{equation}
    \epsilon = O\left(\frac{E\delta^2}{T^2}\right).
\end{equation}
Our total evolution cost will be over all values of $j$. 
Taking our value of $w$, the cost of step $j$ from equation \ref{equ:LCU} and \ref{equ:star}; we obtain that the total number of queries required by our algorithm is

\begin{equation}
    O\left(\sum_{j=1}^{w = e^{Tc(1+\delta)}}\frac{j\log_{1+\delta}(\frac{jT^2}{E\delta^2})(d_A^2\|A\| + d_B^2\|B\|)\log(T^2j/E)n}{\log\log(T^2j/E)}\right)
\end{equation}

\begin{equation}
  =O\left(\frac{e^{2Tc(1+\delta')}(d_A^2\|A\| + d_B^2\|B\|)\log(1/E)n}{\log\log(1/E)}\right), 
\end{equation}
where 
\begin{equation}
    c \leq  \max\{\|A\|, 12.7\|B\|\}, \label{eq:cbd}
\end{equation} and $\delta'> \delta$ remains an arbitrarily small constant. This concludes our proof of Theorem \ref{theorem:1}. \end{proof}

We have provided a quantum algorithm for our initial problem of solving a nonlinear differential equation as described in equation \ref{equ:problem}; with a scaling in time and in the norm of our nonlinear term that is similar to our lower bound from Lemma \ref{lemma:nogo2}, and logarithmic in the inverse of the error. 
We note however that what we obtain is not equivalent to a classical vector from which we may directly read values. For a fair comparison with our classical algorithm, we will thus compute the expected value of our output vector. 

\begin{corollary}
\label{corollary:errorScaling}
     Suppose we have an equation of the form

    $$\frac{d\ket{u}}{dt} = A\ket{u} + B\ket{u}^{\otimes2}$$
    where $A\in \mathbb{C}^{2^n\times 2^n}$, $B \in \mathbb{C}^{2^{n}\times 2^{2n}}$ are $d-sparse$ complex matrices. Let $U$ be some unitary matrix of matching dimension which may be queried in time $O(1)$. Then there exists a quantum algorithm computing the expectation value $\bra{u(T)}U\ket{u(T)}$ with probability greater than or equal to $2/3$  for a time $T>0$ within error $\epsilon$ using at most

    $$\Tilde{O}\left(\frac{e^{Tc}d^2\log(\frac{1}{\epsilon})}{\epsilon}\right)$$
    oracle queries where $c \in O(\|A\|+\|B\|)$ from~\eqref{eq:cbd}.
\end{corollary}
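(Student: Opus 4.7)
The plan is to combine the state-preparation algorithm of Theorem~\ref{theorem:1} with a standard quantum amplitude-estimation procedure to turn the prepared state into an $\epsilon$-accurate estimate of $\bra{u(T)}U\ket{u(T)}$, and then balance the two error sources to optimize the query count.

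First, I would invoke Theorem~\ref{theorem:1} to construct a state-preparation unitary $V$ (acting on the system plus some ancillas, with a flag indicating success) such that on the success branch $V\ket{0}$ produces a normalized state $\ket{\tilde u(T)}$ with $\bigl\| \ket{\tilde u(T)} - \ket{u(T)}\bigr\| \leq E$ for a parameter $E$ to be chosen, using $\tilde O\bigl(e^{2Tc(1+\delta')} d^2 \log(1/E)\bigr)$ oracle queries. Because $U$ is unitary, a routine application of the triangle inequality and Cauchy--Schwarz gives
\begin{equation}
\bigl|\bra{\tilde u(T)}U\ket{\tilde u(T)} - \bra{u(T)}U\ket{u(T)}\bigr| \leq 2E,
\end{equation}
so it suffices to set $E = \Theta(\epsilon)$ to transfer the state-preparation error into an $O(\epsilon)$ contribution to the final expectation-value error.

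Second, to read out $\bra{\tilde u(T)}U\ket{\tilde u(T)}$ I would use the standard Hadamard-test circuit that reduces the real (and separately, the imaginary) part of the expectation to an amplitude on an ancilla, followed by quantum amplitude estimation on that amplitude. Amplitude estimation achieves additive error $\epsilon$ with $O(1/\epsilon)$ uses of the controlled state-preparation unitary and controlled-$U$, in contrast to the $O(1/\epsilon^2)$ cost of na\"ive sampling. Taking a constant number of rounds and a median-of-means trick lifts the $\tfrac{2}{3}$ success probability of amplitude estimation into the required $\geq \tfrac{2}{3}$ for the overall procedure; the queries to $U$ contribute only a constant factor since we assumed $U$ is queryable in $O(1)$ time.

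Third, I would combine the two costs. Each invocation of the Hadamard-plus-amplitude-estimation block calls the state preparation $O(1/\epsilon)$ times, and each call costs $\tilde O\bigl(e^{2Tc(1+\delta')} d^2 \log(1/\epsilon)\bigr)$ oracle queries with the $E=\Theta(\epsilon)$ choice. Multiplying and absorbing $(1+\delta')$ into the $e^{Tc}$ factor by renaming the constant $c$ gives the advertised
\begin{equation}
\tilde O\!\left(\frac{e^{Tc} d^2 \log(1/\epsilon)}{\epsilon}\right)
\end{equation}
query complexity, with $c \in O(\|A\|+\|B\|)$ inherited from~\eqref{eq:cbd}.

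The main subtlety I expect is not in the counting but in handling the fact that the state produced by Theorem~\ref{theorem:1} is embedded in a larger Hilbert space (the Carleman block $u_1$ component of a larger block vector, conditioned on the LCU success flag). To apply amplitude estimation coherently, $V$ must be a genuine unitary whose success branch carries $\ket{\tilde u(T)}$ with a known amplitude close to $1$; this requires either oblivious amplitude amplification to boost the success probability to a constant before the Hadamard test, or a careful accounting of the amplitude on the success flag when defining the target amplitude for amplitude estimation. Either way, both adjustments add only constant overhead and do not affect the stated asymptotic complexity.
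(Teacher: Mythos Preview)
Your proposal is correct and follows essentially the same approach as the paper: prepare $\ket{\tilde u(T)}$ via Theorem~\ref{theorem:1} with error $\Theta(\epsilon)$, feed it into a Hadamard test, and use amplitude estimation to extract the expectation value with $O(1/\epsilon)$ calls rather than $O(1/\epsilon^2)$ samples. Your treatment is slightly more careful than the paper's in two places --- the explicit $2E$ triangle-inequality bound and the remark about the Carleman embedding and success flag --- but the structure and cost analysis are the same.
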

\begin{proof}
 We may assume without loss of generality that our inputs are real-valued and that we operate strictly within the reals (see the introduction for the necessary mapping). We will make use of the Hadamard test to obtain an estimate of $\bra{u(T)}U\ket{u(T)}$. We add an qubit $\ket{0}$, then the circuit
$$
\begin{quantikz}
\lstick{$\ket{0}$} &\gate{H}& \ctrl{1} &\gate{H}&\\
\lstick{$\ket{u(T)}$} && \gate{U} &&
\end{quantikz}
$$
 creates the state 

\begin{equation}
    \ket{0}\ket{u}\to \frac{1}{\sqrt{2}}(\ket{0}\ket{u}+\ket{1}\ket{u}) \to (\ket{0}\ket{u}+\ket{1}U\ket{u}) \to \frac{1}{2}(\ket{0}(\ket{u}+U\ket{u})+\ket{1}(\ket{u}-U\ket{u}))
\end{equation}
The probability of measuring the first qubit to be $0$ is

\begin{align}
    \frac{(\ket{u}+U\ket{u})^\dagger(\ket{u}+U\ket{u})}{4} &= \frac{\bra{u}\ket{u}+\bra{u}U\ket{u}+\bra{u}U^\dagger\ket{u} + \bra{u}UU^\dagger\ket{u}}{4}\\
    &=\frac{1+(\bra{u}U\ket{u})^* + (\bra{u}U\ket{u})+ 1}{4}\\
    &=\frac{1+{\rm Re}(\bra{u}U\ket{u})}{2}.
\end{align}
The above Hadamard test circuit can be modified to yield the expectation value of the imaginary part can be estimated by simply applying a $S^\dagger$ gate to the control qubit before $U$ is applied.
 Given that $\bra{u}U\ket{u}$ will be strictly real, this is equivalent to $\frac{1}{2}(1+\bra{u}U\ket{u})$. We may employ the \emph{amplitude estimation} heuristic from \cite{Brassard_2002} find $a$ such that $|a -\frac{1}{2}(1+\bra{u}U\ket{u})| \leq \frac{\epsilon'}{2}$ with probability of failure less than $1/6$.  Repeating this for the imaginary part yields from the union bound a probability of failure of at most $1/3$.  These two processes combined require using at most $O(\frac{1}{\epsilon}')$ iterations of our circuit. Multiplying our result by 2 and removing 1 then gives us our desired estimate of $\bra{u(T)}U\ket{u(T)}$ within error $\epsilon'$. 

The amplitude estimation process described above requires a reflection about the state $\ket{u(T)}$.  This reflection can be implemented using at most a single query to an oracle that prepares $\ket{u(T)}$ from $\ket{0}$ and the inverse of this oracle.  The result of Theorem \ref{theorem:1} provides us with an approximate necessary oracle to $\ket{u'(T)}$ such that $\|\ket{u'(T)}-\ket{u(T)}\| \leq \epsilon''$ using at most

\begin{equation}
    N_{\rm queries} \in O\left(\frac{e^{2Tc(1+\delta')}(d_A^2\|A\| + d_B^2\|B\|)\log(1/\epsilon'')n}{\log\log(1/\epsilon'')}\right)  
\end{equation}

queries to our oracles $O_u$, $O_A$ and $O_B$. Setting $\epsilon' = \epsilon/2, \epsilon'' = \epsilon/2$ our result follows.

\end{proof}

The optimal scaling that can be achieved for simulation of non-linear differential equations in time is given by Lemma~\ref{lemma:nogo2}.  Thus the quantum algorithm is near-optimal in general.

\subsection{Continuity with Hamiltonian Simulation}

\label{sec:cont}
We will notice that our result does not appear to demonstrate the expected behavior for a small nonlinearity. Indeed, our algorithm would result in exponential time scaling, even when our nonlinearity is set to $0$. In this section, we will establish that we may indeed find good algorithms when $\|B\|$ is sufficiently small. 

Assume that $\|B\| < \frac{1}{10e^{\|A\|T}T}$. Then, we propose to separate our time steps into $s$ \emph{equal} slices $t_j < \frac{1}{\|A\|}$, and set each $k_j$ to $k$. As per the previous definitions, $m_j = jk$ in this definition. 

We remind ourselves of the following equation from \ref{eq:bound}, \ref{eq:app}, 

\begin{equation}
     \|F_{m,m+k}(t)\| \leq \left(\frac{e(m+k-1)}{k}\right)^k\frac{\|B\|^k}{\|A\|^k}e^{mt\|A\|}(\|A\|\hat{t}_j \times (e-1))^k,
\end{equation}
and we have then:

\begin{equation}
    \frac{m_st\|A\|}{k} = \frac{(sk)(\frac{T}{s})\|A\|}{k} =\|A\|T
\end{equation}

\begin{align}
    \|F_{m_s,m_s+k}(t)\| &\leq \left(\frac{es}{1+1/s}\frac{\|B\|}{\|A\|}e^{\|A\|T}\frac{T}{s} \|A\| (e-1)\right)^k\\
    &\leq \frac{1}{2^k}.
\end{align}
The error introduced at a given time step is bounded by the norm of what we discard. This will be the sum of the tail $\leq \sum_k^{\infty}\frac{1}{2^j} = \frac{1}{2^{k-1}}$ for our choice of $k$. We require $w = O(\|A\|T)$ time slices to fulfill the condition $t_j < \frac{1}{\|A\|}$. Let ${F'}_j$ be the matrix we apply at time step $j$, where $F_j$ is the exact solution. Given an error of at most $\delta$ at each time step, and our steps are multiplicative, the global error introduced by our approximate Carleman linearization will be bounded by

\begin{equation}
    \|({F'}_j\ket{u(0)}-{F}_j)\ket{u(0)}\| \leq (1+\delta)^w.
\end{equation}
If we set $\delta \leq \frac{\epsilon}{2w^2}$, we may bound this by:

\begin{equation}
    \|({F'}_j\ket{u(0)}-{F}_j)\ket{u(0)}\| \leq 2w\delta \leq \epsilon.
\end{equation}
Thus, setting $k = O(\log(\frac{1}{2\epsilon w^2})$, we may obtain an error of at most $\epsilon$. As each time step requires $k$ Carleman blocks, we require a Carleman matrix with $kw = O(\frac{\|A\|T\log(\|A\|\|T\|}{\log(\epsilon)})$ blocks, which will have a norm of at most $O(\frac{\|A\|^2T^2\log(\|A\|\|T\|}{\log(\epsilon)})$ and sparsity $O((d_A+d_B)\frac{\|A\|T\log(\|A\|\|T\|}{\log(\epsilon)})$, as these both grow linearly with the number of Carleman blocks. We may then proceed using standard Hamiltonian simulation techniques to simulate our approximate Hamiltonian with a linear dependence on norm and sparsity of our approximate Hamiltonian, such as, for example, in \cite{7354428}. 

\subsection{Dequantized Algorithm} 
\label{sec:classical}

A question remains about the gap between randomized classical algorithms and quantum algorithms.  We aim to address this now by replicating the process for classical computation and analyzing the scaling difference we can expect. Difficulties arise in the conversion of quantum states into a classical setting; we cannot efficiently represent the Hilbert space in which the states reside. Upon measurement, we obtain an n-dimensional vector; hence our objective will be to find its expected value.

We cannot give our classical algorithm any direct access to the input as this would be too ``powerful"; allowing us to circumvent the lower bounds set by Lemma \ref{lemma:nogo2}, which relies on the difficulty of state discrimination. We will instead allow our classical algorithm to sample from the input. Given an input state:

\begin{equation}
    \ket{\phi} = \sum_j e^{ik_j}c_j\ket{j},
\end{equation}
such that $0 \leq k_j \leq 2\pi$, $c_j \geq 0$ and $\sum_j c_j^2 =1,$ let our classical oracle $O_c$ output the pair $\{j,k_j\}$ with probability $c_j^2$. We note here that this is still not an ideal comparison to the oracle provided in the context of our quantum algorithm; as it solves the phase estimation problem for free. However, it is sufficient for our lower bound to apply; indeed, in the lower bound arguments introduced in Section \ref{sec:nogo}, the states we wish to distinguish are oriented vertically on the Grover sphere initially, and thus the phase being provided does not make the problem easier.  The reader may notice here that equivalently, we may come up with a scheme in which the $c_j$ are directly provided but the $k_j$ are obfuscated, for which the lower bounds would still apply; access to a combination of the two is required for the bound to be broken.

In the context of standard Hamiltonian Simulation, the quantum advantage that we may obtain for certain types of Hermitian operators often relies on a large Hilbert space. The Path Integral Monte-Carlo method may be used to circumvent these constraints in a probabilistic manner, but it generally scales poorly in other variables, such as the sparsity of the operator, the norm of our Hermitian operators, and/or evolution time. 

Determining the exact relation between quantum and classical scaling does not follow clearly for the quadratic differential equation problem. The exponential complexity in time is distinct from how we generally compare quantum and classical simulation; as previously stated, the bound does not exist in the context that classical algorithms are usually defined (access to a deterministic oracle to the input); which would give us the false impression that the classical model has an advantage.

While the partially probabilistic oracle introduced above solves this issue, we note that the Carleman Linearization technique we proposed for the Quantum case creates a linear operator with sparsity and norm scaling exponentially with the time constraint; simple bounds on the path integral Monte Carlo method on such a linear operator would lead to super-exponential scaling in time. Alternative methods may thus be necessary.

First, in order to understand the difference between the performance of quantum and classical algorithms for non-linear differential equations more clearly we need to understand the relationship between the best time scaling achievable for quantum and classical methods.  To this end, we provide a classical method that uses sampling that retains many of the same features as the quantum case including saturating the optimal time scaling of the quantum algorithm, although it may scale poorly in other variables in generality.  We provide the argument below.

\begin{lemma}
\label{euler}
Suppose we have for bounded matrices $A$ and $B$ an equation of the form:

    $$\frac{d\ket{u}}{dt} = A\ket{u} + B\ket{u}^{\otimes2}$$
    Where $A$ and $B$ are $d-sparse$, $\ket{u}$ is a norm 1 vector in $\mathbb{C}^{2^n}$ and the assumption is made that the linear transformation $V(t)u\ket{u}$ on $\ket{u}$ remains unitary at all times $t$. Then there exists a classical algorithm which outputs $v(j)$ such that:
        $$\|v(j)-\bra{u(0)}V(T)\ket{u(0)}\| \leq \epsilon$$
    for a time $T>0$ with respect to the Euclidean norm, using at most

    $$O\left(\frac{d2^nTe^{2(\|A\|+\|B\|)T})}{\epsilon^2}\right)$$   
    queries to classical oracles $O_A, O_B$ such that $O_A(i,j)$ yields the value of the $i^{\rm th}$ non-zero matrix element in row $j$ and similarly $O_B(i,j)$ yields the corresponding matrix element in row $j$ of $B$ and oracles $f_A,f_B$ such that $f_A(x)=y$ if and only if $A_{xy}$ is the $i^{\rm th}$ non-zero matrix element in row $x$ and $f_B$ yields the corresponding locations of the non-zero matrix elements of $B$.
\end{lemma}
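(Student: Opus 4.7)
The plan is to simulate the nonlinear ODE by explicit forward Euler, maintaining the full state vector $u_k \in \mathbb{C}^{2^n}$ in classical memory throughout. We fix a step count $N$, set $\Delta t := T/N$, initialize $u_0 := \ket{u(0)}$, and iterate
\begin{equation}
u_{k+1} = u_k + \Delta t\,\bigl(A u_k + B u_k^{\otimes 2}\bigr), \qquad k = 0, 1, \ldots, N-1.
\end{equation}
We then return $\langle u_0, u_N\rangle$ as the estimate of $\bra{u(0)} V(T)\ket{u(0)}$, which is correct up to the error $\|u_N - u(T)\|$ by Cauchy--Schwarz combined with $\|u_0\| = 1$.

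Next I would analyze the per-step oracle cost. Computing $(A u_k)_i$ at a fixed row index $i$ requires reading the $d$ non-zero column indices of row $i$ via $f_A(i,\cdot)$ together with the corresponding values via $O_A(i,\cdot)$, which costs $O(d)$ queries. Computing $(B u_k^{\otimes 2})_i$ is analogous: even though $B \in \mathbb{C}^{2^n \times 2^{2n}}$, the $d$-sparsity is per row, so $f_B$ returns the $d$ non-zero column indices of row $i$ as pairs $(j_1,j_2)$ and $O_B$ returns the matching matrix values. Summing over the $2^n$ row indices yields $O(d\cdot 2^n)$ oracle queries per Euler step and $O(N\cdot d\cdot 2^n)$ in total.

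To determine $N$, I would carry out the standard error analysis for forward Euler on a nonlinear Lipschitz system, tailored to the polynomial right-hand side $f(u) = Au + Bu^{\otimes 2}$. Using the norm-preservation assumption $\|u(t)\| = 1$ and the identity $u^{\otimes 2} - v^{\otimes 2} = u\otimes(u-v) + (u-v)\otimes v$, the map $f$ is Lipschitz with constant at most $L := \|A\| + 2\|B\|$ on the unit ball. The local truncation error per step is $O(\Delta t^2\,\|f'(u)f(u)\|) = O(\Delta t^2 L^2)$, and a Gr\"onwall recursion on $e_k := u_k - u(t_k)$ of the form $\|e_{k+1}\| \leq (1 + L\Delta t)\|e_k\| + O(\Delta t^2 L^2)$ gives a global error bounded by $O(\Delta t\,L\,e^{LT})$. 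Forcing this to be at most $\epsilon$ pins down $\Delta t$ and hence $N$, and substituting into the per-step cost yields the claimed query bound, with the $e^{2(\|A\|+\|B\|)T}/\epsilon^2$ factor arising from the nonlinear amplification of both the state-norm drift and the error through the quadratic $B$-term.

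The main obstacle is the tightness of the Gr\"onwall step for the quadratic nonlinearity: because the Lipschitz constant depends on $\|u\|$ and the numerical iterate $\|u_k\|$ will drift slightly from the exact unit norm, the stability constant for the nonlinear propagator is not simply $e^{LT}$ but $e^{2LT}$ once one tracks the second-order feedback of the error through $Bu^{\otimes 2}$. The correct exponent therefore requires bounding the norm drift of $u_k$ and the error propagation simultaneously, rather than treating them separately. A smaller technical point is ensuring the amortization of the $B$-oracle cost works as claimed, so that the $2^{2n}$ dimension of the second argument of $B$ never enters the query count.
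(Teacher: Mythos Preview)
Your core strategy---forward Euler on the full $2^n$-dimensional vector with $O(d\cdot 2^n)$ oracle queries per step---is exactly what the paper does. The Lipschitz analysis with $L=\|A\|+2\|B\|$ and the resulting global Euler error $O(\Delta t\,e^{LT})$ also match. Where you diverge from the paper is in the origin of the $e^{2(\|A\|+\|B\|)T}/\epsilon^2$ factor, and your explanation for it does not hold up.

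In the paper's classical model the initial state $\ket{u(0)}$ is \emph{not} given directly: it is accessed only through a sampling oracle $O_c$ that returns index $j$ with probability $|\braket{j|u(0)}|^2$ together with the phase. Reconstructing $u(0)$ to Euclidean accuracy $\delta$ therefore costs $O(1/\delta^2)$ samples. Since an initial perturbation of size $\delta$ propagates forward as $\delta e^{LT}$ under the same Gr\"onwall bound, one must take $\delta=O(\epsilon e^{-LT})$, and the sampling cost becomes $O(e^{2LT}/\epsilon^2)$. That is the source of both the squared exponential and the $1/\epsilon^2$ in the stated bound; the Euler stepping itself contributes only $O(d\,2^n\,T e^{LT}/\epsilon)$.

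Your alternative justification---that norm drift of the numerical iterate feeds back through $Bu^{\otimes 2}$ to square the exponential---does not produce the claimed scaling. The drift $\bigl|\|u_k\|-1\bigr|$ is itself bounded by the Euler error $O(\epsilon)$, so the effective Lipschitz constant on the relevant ball is still $L+O(\epsilon\|B\|)$, and the global error remains $O(\Delta t\,e^{LT})$, linear in $\Delta t$ with a single exponential. There is no mechanism in the Euler analysis alone that turns $e^{LT}/\epsilon$ into $e^{2LT}/\epsilon^2$. You need to bring in the sampling-oracle cost for the initial state to close the argument.
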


\begin{proof}
    %If our matrix dimensions are constant or linear with respect to other variables (or even exponential relative to our evolution time), simply reading the matrix values and applying known methods for solving differential equations is perfectly acceptable. 
    Our proof for this claim immediately follows from using the Euler method to approximate the solution to the non-linear differential equation.  Specifically, if  $T/h$ steps are used in the numerical solution to the differential equation then the error in the solution is \cite{hildebrand1987introduction}

\begin{equation}
    |u(t_i) - u(T)| \leq \frac{hMe^{LT}}{2L},
\end{equation}
where $M = \max |\ddot{u}(t)|$ and $L = \max_t |\frac{d}{du} (Au+Bu^{\otimes 2})|$. As $u(t)$ has a constant norm, this gives us bounds on both $M$ and $L$ of $M,L \leq \|A\| + 2\|B\|$. Let $u(t_i)$ be the Euler approximation at the $i$-th step. Thus, setting $h = \frac{4\epsilon/2}{e^{(|A|+2|B|)T}}$, we obtain:

\begin{align} \label{eq:eulerst}
    |u(t_h) - u(T)| &\leq \frac{\frac{4\epsilon^2/2}{e^{(|A|+2|B|)T}}(\|A\| + 2\|B\|)e^{(\|A\| + 2\|B\|)T}}{2(\|A\| + 2\|B\|)}\\
    & \leq \frac{\epsilon}{2}.
\end{align}
Now, we must account for our initial error. Given $\|\Tilde{u}(0)-u(0)\| \leq \delta$, we may represent then $\Tilde{u}(0)=u(0) +r(0)$ where the remainder $r(t)$ is defined to be the difference between $\Tilde{u}(t)$ and $u(t)$. We know that $\frac{dr}{dt}$ is bounded by $Lr(t)$ by our definition of $L$ as the maximum derivative in $u$; thus, we have the following differential equation:

\begin{equation}
    \frac{dr}{dt}\leq Lr(t), \quad r(0) = \delta.
\end{equation}
Integrating the above bound on the derivative we obtain
\begin{equation} \label{eq:initialerr}
    r(T) \leq \delta e^{LT}.
\end{equation}
Noting here that $L\leq \|A\| + 2\|B\|$ and setting here $\delta = \frac{\epsilon}{2e^{(\|A\| + 2\|B\|)T}}$ we obtain that $\|u'(0)-u(0)\| \leq \frac{\epsilon}{2}$. From Equation \ref{eq:eulerst} and Equation  \ref{eq:initialerr} we then have

\begin{equation}
    \|\Tilde{u}(t_h) - u(T)\| \leq \|\Tilde{u}(t_h)- \Tilde{u}(T)\| + \|\Tilde{u}(T) - u(T)\| = \|\Tilde{u}(t_h)- \Tilde{u}(T)\| + \|r(T)\| \leq \epsilon/2 + \epsilon/2 = \epsilon
\end{equation}
Which is the desired precision bound. 

To achieve our initial error of $\delta = \frac{\epsilon}{2e^{(\|A\| + 2\|B\|)T}}$, we will need $O(\frac{e^{2(|A|+2|B|)T}}{\epsilon^2})$ queries. Performing 1 Euler step by directly querying all values will be $O(d2^n)$ queries, and we require $T/h$ Euler steps. This leads to the total scaling of $O(\frac{d2^nTe^{2(|A|+|B|)T}}{\epsilon^2})$ queries for the problem. 
\end{proof}

This close to matches our quantum algorithm, and quantum lower bound in its time scaling. The error scaling is quadratically worse than our quantum algorithm, when an appropriate comparison is made (see Corollary \ref{corollary:errorScaling}). However, the scaling in the dimension is potentially exponentially worse. This is the result of having to compute $f(t,u(t) = Au(t)+Bu(t)^{\otimes2}$ for each of our iterations. 

We may note here, however, that our initial vector only has $O(\frac{e^{2(|A|+2|B|)T}}{\epsilon^2})$ non-zero entries. While this would result in superexponential scaling in generality (as the number of non-zero elements will be multiplied by up to $(d_A+d_B+1)$ at each Euler step), there are cases for which the growth is more reasonable. The simplest case is such that $A$ is diagonal and $B$ is a diagonal block matrix with each block $B_{j,j}$ being a length $n$ vector, and $B_{j,k} =0$ for $k \neq j$. In this case, the number of non-zero entries remains fixed at each Euler step. More broadly, we will be able to define the following class of equations which we can more efficiently solve in a classical setting.

\begin{definition}[Geometrically Local Equation]
\label{def:gemloceq}

Given an equation $\frac{du}{dt} = f(u)$ such that $u \in \mathbb{C}^{2^n}$ and $f:\mathbb{C}^{2^n}  \to \mathbb{C}^{2^n}$ . Let $\ket{x}$,$\ket{y}$ be elementary basis vectors in $\mathbb{C}^{2^n}$.
\begin{enumerate}
    \item We say that $\ket{x}$, $\ket{y}$ \emph{interact} if there exists $u \in \mathbb{C}^{2^n}$, $c \in \mathbb{C}$ such that $\bra{x}f(\ket{u}) \neq \bra{x}f(\ket{u} + \ket{y})$.   
    \item We say that the equation is \emph{geometrically $k$-local in a graph $G$} if we may map all basis vectors in $\mathbb{C}^{2^n}$ onto the vertex set of a graph $G$ in such a way that for two basis vectors $\ket{x}$ and $\ket{y}$ that do not interact, the graph distance $d(x,y)$ is strictly greater than $k$.
\end{enumerate}
\end{definition}

We may note here that this is a different definition to that of a geometrically local Hamiltonian, from the quantum setting. We borrow the terminology as it shares the intuition of geometric locality.

Definition \ref{def:gemloceq} will allow us to bound the number of elements we need to consider in the Euler method. Specifically, given $\tilde{u}(t_{y+1}) = \tilde{u}(t_y) + (T/h)f(\tilde{u}(t_y))$, as per the Euler method. We may see that if $\tilde{u}_j(t_{y+1})$ is non-zero, then there must be some $k$ interacting with $j$ such that $\tilde{u}_k(t_{y})$ is non-zero.

\begin{theorem}
\label{thm:geomloc}
Suppose we have for bounded matrices $A$ and $B$ an equation of the form:

    $$\frac{d\ket{u}}{dt} = A\ket{u} + B\ket{u}^{\otimes2},$$
    where $\ket{u}$ is a norm 1 vector in $\mathbb{C}^{2^n}$ and the assumption is made that $\ket{u(t)}$ remains a unit vector at all times $t$. Assume further that our equation is geometrically $k-local$ in $\mathbb{Z}^D$. Then there exists a classical algorithm which outputs $v(j)$ in $\mathbb{C}^{2^n}$ such that:
        $$\|v(j)-\ket{u(T)}\| \leq \epsilon$$
    for a time $T>0$ with respect to the Euclidean norm, using at most

    $$O\left(\frac{(kT)^{D+1}e^{2(D+1)(|A|+2|B|)T}}{(2\epsilon)^{D+2}}\right)$$   
    queries to classical oracles $O_A, O_B$ such that $O_A(i,j)$ yields the value of the $i^{\rm th}$ non-zero matrix element in row $j$ and similarly $O_B(i,j)$ yields the corresponding matrix element in row $j$ of $B$ and oracles $f_A,f_B$ such that $f_A(x)=y$ if and only if $A_{xy}$ is the $i^{\rm th}$ non-zero matrix element in row $x$ and $f_B$ yields the corresponding locations of the non-zero matrix elements of $B$.
\end{theorem}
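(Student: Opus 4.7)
The plan is to repeat the Euler-method analysis of Lemma~\ref{euler} but replace the $d\cdot 2^n$ dimension-dependent cost of one Euler step by a cost proportional to the current support of our approximate vector, then exploit geometric $k$-locality in $\mathbb{Z}^D$ to give a polynomial light-cone bound on that support.

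First, I would reuse the initial-sampling and step-size choices of Lemma~\ref{euler}: draw $s_0 = O(e^{2(\|A\|+2\|B\|)T}/\epsilon^2)$ classical samples from the input oracle to obtain a sparse $\tilde{u}(0)$ with $\|\tilde{u}(0) - u(0)\| \le \epsilon/(2e^{(\|A\|+2\|B\|)T})$, and take $N_s = T/h$ Euler steps with $h$ as in Lemma~\ref{euler}. The error analysis of that lemma applies verbatim and yields $\|\tilde{u}(t_{N_s}) - u(T)\| \le \epsilon$ regardless of how each Euler update is implemented; only the per-step query count needs revisiting.

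The new ingredient is a light-cone bound on the support of $\tilde{u}(t_\ell)$. By Definition~\ref{def:gemloceq}, if two basis vectors $\ket{x}, \ket{y}$ have graph distance greater than $k$ in $\mathbb{Z}^D$, then $y$ cannot influence $\bra{x}f(u)$, so a single Euler update $\tilde{u}(t_{\ell+1}) = \tilde{u}(t_\ell) + h\,f(\tilde{u}(t_\ell))$ can extend the support by at most graph distance $k$. After $\ell$ steps starting from $s_0$ initial points, the support therefore lies in a union of $\mathbb{Z}^D$-balls of radius $k\ell$ around the original samples and has cardinality at most $O(s_0 (k\ell)^D)$, since a ball of radius $k\ell$ in $\mathbb{Z}^D$ contains $O((k\ell)^D)$ points. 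Evaluating $f$ at any active position costs only $O(k^D)$ oracle queries, because only lattice neighbors within graph distance $k$ can contribute to that coordinate and $f_A,f_B,O_A,O_B$ allow the non-zero entries of the relevant rows to be enumerated directly rather than by a full column scan.

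Summing the per-step cost $O(s_0(k\ell)^D \cdot k^D)$ over $\ell = 1,\ldots,N_s$ and substituting the expressions for $s_0$ and $N_s$ yields a total query count that, after consolidating the resulting polynomial and exponential factors, matches the stated $O((kT)^{D+1} e^{2(D+1)(\|A\|+2\|B\|)T}/(2\epsilon)^{D+2})$ bound. The step I expect to be the main obstacle is justifying the light-cone bound at the level of the quadratic term: the interaction relation in Definition~\ref{def:gemloceq} is stated pairwise, so one must argue carefully that a pair $(j_1,j_2)$ contributing to $(Bu^{\otimes 2})_i$ forces both $d(i,j_1) \le k$ and $d(i,j_2) \le k$, so that the new support at step $\ell+1$ really lies within graph distance $k$ of the old support. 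Once that is established, the rest is a mechanical combination of the Euler error analysis from Lemma~\ref{euler}, the ball-volume estimate in $\mathbb{Z}^D$, and the sparse-access promise of $f_A$ and $f_B$.
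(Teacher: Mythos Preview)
Your proposal is correct and follows essentially the same route as the paper: reuse the Euler error analysis of Lemma~\ref{euler} unchanged, bound the support of the approximate vector after $\ell$ steps by $O\bigl(s_0\,(k\ell)^D\bigr)$ via the light-cone argument in $\mathbb{Z}^D$, and sum the per-step cost over the $T/h$ steps. The paper's proof differs only in minor bookkeeping (it charges $O(k)$ rather than your $O(k^D)$ oracle calls per active coordinate and is somewhat loose in the final summation), and it does not spell out the quadratic-term subtlety you correctly flag as the main point needing care.
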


\begin{proof}
The method is identical to \ref{euler}, we reevaluate the cost of performing our Euler steps. As each Euler iteration is written as $v(j) + f(v(j), t)$, where $f(u(j)) = \ket{u} + B\ket{u}^{\otimes2}$, all non-zero elements of $v(0)$ may be a distance of at most ${jk}$ from a non-zero element in $v(j)$. Our initial vector has at most $O(\frac{e^{2(|A|+2|B|)T}}{\epsilon^2})$ non-zero values. Given that  $f(u) = A\ket{u} + B\ket{u}^{\otimes2}$ is geometrically local in $\mathbb{Z}^D$, the number of non-zero values of $u(j)$ will thus be bound by the following equation.

\begin{equation}
    O(\frac{e^{2(|A|+2|B|)T}}{\epsilon^2}) \times (jk)^D
\end{equation}
Now, summing over all $m = T/h$ Euler steps,

\begin{equation}
    \sum_{j=1}^m O(\frac{e^{2(|A|+2|B|)T}}{\epsilon^2}) \times (jk)^D
\end{equation}

\begin{equation}
    = O(\frac{e^{2(|A|+2|B|)T}}{\epsilon^2}) \times k^D \times  \sum_{j=1}^{Om}  j^D
\end{equation}

\begin{equation}
    = O(\frac{e^{2(|A|+2|B|)T}}{\epsilon^2}) \times k^D \times  O(m^D)
\end{equation}
Noting here that $m\in O(T\frac{e^{(|A|+2|B|)T}}{4\epsilon/2})$:
\begin{equation}
    = O\left(\frac{(kT)^De^{2(D+1)(|A|+2|B|)T}}{(2\epsilon)^{D+2}}\right)
\end{equation}
We note that we need $O(k)$ calls to our oracle $O_A$ and $O_B$ for each non-zero vector value in $u(j)$. Further, the cost of computing our Euler steps here strictly exceeds the cost of creating an initial vector of sufficient precision. The result follows.  
\end{proof}
A notable set of equations that are geometrically $k-local$ in $\mathbb{Z}^D$ are band matrices (allowing $D$ bands of width up to $k$ in $A$ and width up to $2^n(k-1)$ in $B$).

We will now attempt to write a classical algorithm that will scale polynomially with $n$ in generality, by imitating the methods used for our quantum algorithm. As we will see, this leads to even worse scaling with $T$. 

Part of our arguments will depend on the variation in the probabilistic approximation of our target evolution; for this, we will require a numerical value that we can approximate. We will use here the expected value as defined at the end of section \ref{sec:quantum}; $\bra{u(T)}U\ket{u(T)}$, for some unitary $U$.

Our algorithm will produce a random uniform $P(V = v_j) = P(V = v_k)$ for any choice of $j,k$, where $\sum_jv_j \approx \ket{u(T)}$. We will sample from this at random. A trap to avoid here is that randomly sampling $v_j^\dagger Uv_j$ will not average out at $\bra{u(t)}U\ket{u(t)}$; instead, we will need to randomly sample among $v_j^\dagger Uv_k$. This introduces a quadratic factor that we will accept. Call this random variable $V$, and we will let its variation be $\mathbb{V}(V)$. We won't precisely calculate this variance, but instead use it as one of the variables bounding our running time; with a discussion on how large this value could be after the proof.

\begin{theorem}
\label{thm:path}
Suppose we have for bounded matrices $A$ and $B$ an equation of the form:

    $$\frac{d\ket{u}}{dt} = A\ket{u} + B\ket{u}^{\otimes2}$$
    Where $A$ is $d_A$-sparse, $B$ is $d_B-sparse$, $\ket{u}$ is a norm 1 vector in $\mathbb{C}^{2^n}$ and the assumption is made that $\ket{u(t)}$ remains a unit vector at all times $t$. Suppose that the decomposition of our Carleman block matrix for this problem generates a variance of $$\mathbb{V}(V)$$. Given some Unitary matrix $U$; there exists an algorithm which outputs an approximation $v(T)$ of the expectation value such that 
        $$|v(T)-\bra{u(T)}U\ket{u(T)}| \leq \epsilon$$
    with probability at least $2/3$ for a time $T>0$ using at most

    $$O\left(\frac{\mathbb{V}(V)(d_A^2+d_B^2)e^{O(T(\|A\|+\|B\|)(1+\delta))}}{\epsilon^2}\right)$$
    queries to our classical oracles $O_A$, $O_B$ and $O_u$.
\end{theorem}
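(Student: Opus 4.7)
The plan is to dequantize the algorithm of Theorem~\ref{theorem:1} by replacing the LCU-based simulation at each Carleman timestep with a path-integral Monte Carlo estimator, while reusing the linearization and truncation schedule already established in Lemma~\ref{lemma:time}. Specifically, I would keep the same truncation sizes $m_j$, timesteps $\hat{t}_j = 1/(jc(1+\delta))$, sub-step lengths $\tau_j$, and Taylor truncation order $K$ chosen in Section~\ref{sec:quantum}, so that the total deterministic bias introduced by Carleman truncation plus Taylor truncation is at most $\epsilon/2$ over the full evolution of $w = e^{Tc(1+\delta)}$ macroscopic timesteps.

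Next, I would expand each sub-step operator $F_{j+1}(\tau_j)[m_j,m_{j+1}]$ as the Taylor polynomial $\sum_{k=0}^{K}\tau_j^k G_{j+1}^k/k!$, and view the composition of all sub-steps, sandwiched between $\bra{u(0)}$ (sampled via the classical oracle) and $U\ket{u(0)}$ (sampled via the same oracle composed with $U$), as a sum of complex amplitudes indexed by a discrete \emph{path}: at each sub-step the path records the Taylor order $k\leq K$, a chain of $k$ non-zero matrix entries of $G_{j+1}$ linking adjacent basis states, and the intermediate basis states themselves. Inserting a proposal distribution $p$ that is proportional to $\tau_j^k/k!$ over orders and uniform over the at most $d_A+d_B$ non-zero entries in each row, and evaluating the resulting importance-weight ratio using the oracles $O_A$ and $O_B$, turns the target amplitude into the expectation value of a random variable $V$ that is evaluated by at most $O(wK(d_A+d_B))$ oracle queries per sample.

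To estimate $\bra{u(T)}U\ket{u(T)}$ I would draw the two endpoints independently from the $O_u$ distribution, draw a path independently for each sample, and average the resulting estimator over $N$ independent samples. By Chebyshev's inequality applied to the estimator whose variance is $\mathbb{V}(V)$ by hypothesis, taking $N = O(\mathbb{V}(V)/\epsilon^2)$ samples ensures that the sample mean lies within $\epsilon/2$ of $\bra{u(T)}U\ket{u(T)}$ with probability at least $2/3$; combined with the deterministic $\epsilon/2$ bias from truncation this yields the required accuracy. Tallying costs, each sample costs $O(w\cdot K\cdot(d_A^2+d_B^2))$ queries, where the quadratic sparsity factor arises because both endpoints of the amplitude $v_j^\dagger U v_k$ require independently generated path data, and multiplying by $N$ gives the claimed $O(\mathbb{V}(V)(d_A^2+d_B^2)e^{O(T(\|A\|+\|B\|)(1+\delta))}/\epsilon^2)$.

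The main obstacle I expect is not the bookkeeping of queries, but controlling $\mathbb{V}(V)$ itself: because the Carleman blocks $B_m$ have norms scaling linearly with $m$ and the truncation level $m_j$ grows logarithmically with $1/\epsilon$, a naive uniform proposal over paths would allow sign-problem cancellations that blow the variance up super-exponentially and destroy the match with the quantum time scaling. My plan sidesteps this in the theorem statement by exposing $\mathbb{V}(V)$ as an explicit input parameter, exactly as done for path-integral Monte Carlo treatments of Hamiltonian simulation, and then arguing that in the tailored regimes where no sign problem is present the variance is polynomially bounded so the algorithm truly matches the quantum scaling; a fully general analysis of $\mathbb{V}(V)$ would require model-dependent assumptions on $A$ and $B$ and is deferred.
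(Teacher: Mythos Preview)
Your proposal is correct in spirit and would prove the theorem, but it takes a genuinely different route from the paper's own argument. The paper does \emph{not} dequantize the Taylor/LCU expansion of Section~\ref{sec:quantum}; instead it first symmetrizes $B$ into Hermitian blocks $B^{\pm}$, decomposes the resulting Carleman generator $G$ into $O(w d^2\log N)$ one-sparse Hermitian pieces via edge-coloring, and then applies a first-order Trotter splitting $e^{tG}\approx(\prod_j e^{(t/r)H_j})^r$ with $r=O(T/\epsilon)$. Because each $H_j$ is one-sparse, each factor $e^{(t/r)H_j}$ is two-sparse, so a path is just a sequence of binary choices $N_1(k)$ versus $N_2(k)$; the Monte Carlo estimator samples one such binary path and the $(d_A^2+d_B^2)$ factor comes directly from the edge-coloring count, not from sampling two endpoints.

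Your Taylor-based path integral has the advantage of recycling verbatim the bias bookkeeping of Lemma~\ref{lemma:time} and the sub-step schedule of the quantum algorithm, so the deterministic $\epsilon/2$ bias is already controlled. The paper's Trotter route trades that for a simpler path space (binary branching rather than variable Taylor order plus row-entry choices), at the cost of re-doing the bias analysis via the Trotter remainder. One point to clean up in your write-up: your stated reason for the $(d_A^2+d_B^2)$ factor---that two independent paths are needed for the bra and the ket of $v_j^\dagger U v_k$---yields only a constant factor, not a square in the sparsity; in your scheme the per-sample query cost is actually $O(wK)$ up to logarithms, which is already dominated by the $e^{O(Tc)}$ term, so the stated bound holds but for a different reason than you give.
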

\begin{proof}
Through graph coloring, we have access to a $O(d^2\log(N))$ decomposition of a size $N$ $d$-sparse hermitian matrix. This will hold for both $A$ and the extension of $B$ into a pair of Hermitian matrices
\begin{equation}
    B^+ = \begin{bmatrix}
     0 && B  \\
    B^\dagger && 0  \\
    \end{bmatrix}
\end{equation}
\begin{equation}
    B^- = \begin{bmatrix}
     0 && B  \\
    -B^\dagger && 0  \\
    \end{bmatrix}  
\end{equation}
This directly induces a $d^2m\log(N)$-sparse decomposition for $A_m$ and $B^{+/-}_m$, from their respective definitions. Our Carleman Block, with a slight abuse of notation placing each block into its correct position, will be described by

\begin{equation}
    G = \sum_{j=1}^w A_j + \frac{1}{2}B^+_j + \frac{1}{2}B^-_j.
\end{equation}
We note that the set of all even $j$ and the set of all odd $j$ act independently; thus, expanding into our $d^2m\log(N)$-sparse decomposition, this may be expressed as 1 sum of $wd^2\log(N)$ matrices $\sum_{j=1}^{wd^2\log(N)} H_j$. Using a trotter formula, our evolution may then be approximated by:
\begin{equation}
    e^{tG} = \left(\prod_{j=1}^{wd^2\log(N)}e^{\frac{t}{r}H_{j}}\right)^r+ O(t^2).
\end{equation}
This allows us to reduce our error linearly with $r$. We will set $r = O(\frac{T}{\epsilon})$ to ensure our error remains sufficiently small. Note that:

\begin{equation} \label{eq:Y}
    e^{I^{\otimes j} \otimes H \otimes I^{\otimes k}} = I^{\otimes j} \otimes e^H \otimes I^{\otimes j}.
\end{equation}
Thus, we may effectively approximate $e^{A_m}$, $e^{B^+_m}$ and $e^{B^-_m}$.
For any $1$-sparse hermitian $H$,  $e^H$ will be 2-sparse. Ignoring zero-valued terms, our product will take the form
\begin{equation}
  e^{tG}= \prod_{k=1}^{p} (N_1(k) +N_2(k)),
\end{equation}
for some appropriate $p$. $N_1(k)$ and $N_2(k)$ each specify a unique neighbor for us to ``continue" with, given a previous position. We will now take our sum outside of the product to arrive at the following equation,
\begin{equation}
\label{equ:paths}
    e^{tG}=\sum_{l_1=0}^{wd}\sum_{l_2 \in N(l_1)}\sum_{l_3 \in N(l_2)}... \sum_{l_p \in N(l_{p-1})} \prod_{k=1}^{p} N_{l_i}(k),
\end{equation}
where we here slightly abuse notation, $l_s \in N(l_{s-1})$ indicating here the set of $l$ with existing $N_{l_i}(k)$ overlapping with the previous set. We call the elements of our sum ``paths". Intuitively, our path ``travels" through our vector space; with a starting coordinate. The elements of the product describe how the path moves (from coordinate $a$ to coordinate $b$) in a given time splice of our evolution.

Now; computing this exactly will not be possible, due to the impractically large number of possible paths; the objective is instead to approximate the expectation, using Montecarlo methods. 
Let $V$ be the random variable of a randomly chosen path multiplied by the number of different paths, applied to our probability vector. Then,

\begin{align}
    \mathbb{V}\left(\frac{\sum_{k=1}^s V}{s}\right) &= \frac{\mathbb{V}(\sum_{k=1}^s V)}{s^2} = \frac{\sum_{k=1}^s \mathbb{V}(V)}{s^2}\nonumber\\
    &= \frac{\mathbb{V}(V)}{s}.
\end{align}
Chebyshev's inequality then gives us for $\sigma^2 = \mathbb{V}\left(\frac{\sum_{k=1}^s V}{s}\right)$,

\begin{equation}
    Pr(|V-\mathrm{E}[V]|\geq k\sigma) \leq \frac{1}{k^2}.
\end{equation}
Given that $\mathrm{E}\left(\frac{\sum_{k=1}^s V}{s}\right) = \mathrm{E}[X]$, we will require $\sqrt{\frac{\mathbb{V}(v)}{s}} = O(\epsilon \sqrt{\delta})$ if we wish to have a probability of $1-\delta$ of being within $\epsilon$ of our answer.
Thus; the number of samples we require will be $O(\mathbb{V}(V))$.  So our result follows by taking $\delta=1/3$.
Taking the cost of individually simulating each path into consideration, we obtain a total cost of
\begin{equation}
    O\left(\frac{\mathbb{V}(V)n(d_A^2+d_B^2)e^{O(Tc(1+\delta))}}{\epsilon^2}\right).
\end{equation}
This completes our proof of theorem \ref{thm:path} after noting $c\in O(\|A\|+\|B\|)$. 
\end{proof}

Defining exact bounds on the variance of the may be difficult, depending strongly on both the exact problem we are working with and the definition of the matrix on which we are defining our expectation value. Generic bounds, using the maximum norm of any path of 1 will be super-exponential in $T$. Indeed, the number of possible paths we have is super-exponential in $T$ as a result of the exponential sparsity of the largest Carleman block in $T$. However, it should be evident that such bounds vastly overestimate the variance we obtain in practice. In the case of small $B$, as introduced in section \ref{sec:cont}, we may find a variance that is exponential in $Tc$, allowing more reasonable time scaling. Different approaches may however be necessary to obtain optimal time and dimension scaling for the classical case in generality.

\section{Conclusion}

We have studied nonlinear differential equations with unitary solutions. We have found a quantum algorithm solving such equations with $O(e^{cT}\log{\frac{1}{\epsilon}})$ oracle queries, for some constant $c$ dependent on the problem. Further we give classical algorithms that may either match the time scaling, or dimension scaling of our quantum algorithm, though not both in generality. The arguments from Section \ref{sec:nogo} show that any algorithm solving such a problem is necessarily exponential in its time complexity, thus, our algorithms approach the known lower bounds for the problem in time complexity.

Our solutions still offer avenues for significant improvement. Notably, the analysis of the generalized classical algorithm is unlikely to be tight without an in-depth analysis of the variance, and the error scaling can likely be improved. A more suitable classical algorithm may indeed bridge the time-scaling gap with our quantum algorithm. Further, our quantum algorithm could potentially be improved by recent improvements in Hamiltonian simulation, such as the Hamiltonian simulation by qubitization algorithm presented in \cite{Low2019hamiltonian}. These could offer a reduction in the factor of our time exponent if successfully implemented to simulate our linear approximation. The particular formulation of the near-hermitian matrix we obtain for our problem however makes such an implementation non-trivial. 

Establishing a tighter lower bound on the complexity of simulating unitary nonlinear differential equations in terms of both the error and the evolution time would also be an important contribution. This would allow a more direct comparison of the efficiency of our algorithm with complexity theoretic bounds. 

One may want to ask if there exist any families of genuinely nonlinear differential equations under which a quantum algorithm is provably capable of providing exponential speedup over all possible classical algorithms. Given that the linear Hamiltonian simulation problem is a special case of the unitary polynomial equations that we studied, trivially, we may declare the statement to be true; but such an analysis does not demonstrate the capability of quantum computers to handle non-linearity. The work of \cite{Liu_2021} and \cite{https://doi.org/10.48550/arxiv.2202.01054} provide quantum algorithms capable of solving \emph{dissipative} quadratic differential equations in polynomial time under certain restrictions. However, the argument for a quantum advantage of the algorithm from \cite{Liu_2021} over a classical counterpart, once again, uses the special case where the norm of the nonlinearity is set to 0. 

It remains an interesting open problem whether there exists any family of strictly nonlinear differential equations in which a quantum algorithm may provide a provable exponential advantage over any classical algorithm. Ideally, it would be interesting to either:

\begin{enumerate}
    \item Demonstrate that a quantum algorithm may offer no exponential advantage over comparable classical counterparts for any family of strictly quadratic nonlinear differential equations (meaning that $\frac{du}{dt} = F_2u^{\otimes2}$ for a rectangular matrix $F_2$).
    \item Provide a classical lower bound on the running time of a family of strictly quadratic nonlinear differential equations where an exponentially faster quantum algorithm exists.
\end{enumerate}
 Providing clear evidence of such a family would not only improve our understanding of the gap between quantum and classical algorithms for non-linear differential equations but also provide us with a crucial understanding of the extent to which quantum computers will be capable of helping us understand non-linear dynamical systems.

\section*{Acknowledgements}

N.B. is funded by the Department of Computer Science of the University of Toronto, the Ontario Graduate scholarship Program as well as NSERC. N.W. acknowledges funding from Google Inc.  This material is based upon work supported by the U.S. Department of Energy, Office of Science, National Quantum Information Science Research Centers, Co-design Center for Quantum Advantage (C2QA) under contract number DE-SC0012704 (PNNL FWP 76274). 

\bibliographystyle{plainnat}
\bibliography{ref}

\begin{thebibliography}{31}
\providecommand{\natexlab}[1]{#1}
\providecommand{\url}[1]{\texttt{#1}}
\expandafter\ifx\csname urlstyle\endcsname\relax
  \providecommand{\doi}[1]{doi: #1}\else
  \providecommand{\doi}{doi: \begingroup \urlstyle{rm}\Url}\fi

\bibitem[Aaronson and Arkhipov(2011)]{aaronson2011computational}
Scott Aaronson and Alex Arkhipov.
\newblock The computational complexity of linear optics.
\newblock In \emph{Proceedings of the forty-third annual ACM symposium on
  Theory of computing}, pages 333--342, 2011.
\newblock \doi{10.1145/1993636.19936}.

\bibitem[Berry(2014)]{Berry_2014}
Dominic~W Berry.
\newblock High-order quantum algorithm for solving linear differential
  equations.
\newblock \emph{Journal of Physics A: Mathematical and Theoretical},
  47\penalty0 (10):\penalty0 105301, 2014.
\newblock \doi{10.1088/1751-8113/47/10/105301}.
\newblock URL \url{https://doi.org/10.1088%2F1751-8113%2F47%2F10%2F105301}.

\bibitem[Berry and C.~S.~Costa(2024)]{Berry_2024}
Dominic~W. Berry and Pedro C.~S.~Costa.
\newblock Quantum algorithm for time-dependent differential equations using
  dyson series.
\newblock \emph{Quantum}, 8:\penalty0 1369, June 2024.
\newblock ISSN 2521-327X.
\newblock \doi{10.22331/q-2024-06-13-1369}.
\newblock URL \url{http://dx.doi.org/10.22331/q-2024-06-13-1369}.

\bibitem[Berry et~al.(2006)Berry, Ahokas, Cleve, and Sanders]{Berry_2006}
Dominic~W. Berry, Graeme Ahokas, Richard Cleve, and Barry~C. Sanders.
\newblock Efficient quantum algorithms for simulating sparse hamiltonians.
\newblock \emph{Communications in Mathematical Physics}, 270\penalty0
  (2):\penalty0 359–371, 2006.
\newblock ISSN 1432-0916.
\newblock \doi{10.1007/s00220-006-0150-x}.
\newblock URL \url{http://dx.doi.org/10.1007/s00220-006-0150-x}.

\bibitem[Berry et~al.(2015{\natexlab{a}})Berry, Childs, and Kothari]{7354428}
Dominic~W. Berry, Andrew~M. Childs, and Robin Kothari.
\newblock Hamiltonian simulation with nearly optimal dependence on all
  parameters.
\newblock In \emph{2015 IEEE 56th Annual Symposium on Foundations of Computer
  Science}, pages 792--809, 2015{\natexlab{a}}.
\newblock \doi{10.1109/FOCS.2015.54}.

\bibitem[Berry et~al.(2015{\natexlab{b}})Berry, Childs, and
  Kothari]{Berry_2015}
Dominic~W. Berry, Andrew~M. Childs, and Robin Kothari.
\newblock Hamiltonian simulation with nearly optimal dependence on all
  parameters.
\newblock In \emph{2015 {IEEE} 56th Annual Symposium on Foundations of Computer
  Science}. {IEEE}, oct 2015{\natexlab{b}}.
\newblock \doi{10.1109/focs.2015.54}.
\newblock URL \url{https://doi.org/10.1109%2Ffocs.2015.54}.

\bibitem[Berry et~al.(2017)Berry, Childs, Ostrander, and Wang]{Berry_2017}
Dominic~W. Berry, Andrew~M. Childs, Aaron Ostrander, and Guoming Wang.
\newblock Quantum algorithm for linear differential equations with
  exponentially improved dependence on precision.
\newblock \emph{Communications in Mathematical Physics}, 356\penalty0
  (3):\penalty0 1057--1081, 2017.
\newblock \doi{10.1007/s00220-017-3002-y}.
\newblock URL \url{https://doi.org/10.1007%2Fs00220-017-3002-y}.

\bibitem[Brassard et~al.(2002)Brassard, H{\o}yer, Mosca, and
  Tapp]{Brassard_2002}
Gilles Brassard, Peter H{\o}yer, Michele Mosca, and Alain Tapp.
\newblock Quantum amplitude amplification and estimation.
\newblock \emph{Contemporary Mathematics}, 305:\penalty0 53--74, 2002.
\newblock \doi{10.48550/arXiv.quant-ph/0005055}.

\bibitem[Childs and Young(2016)]{PhysRevA.93.022314}
Andrew~M. Childs and Joshua Young.
\newblock Optimal state discrimination and unstructured search in nonlinear
  quantum mechanics.
\newblock \emph{Phys. Rev. A}, 93:\penalty0 022314, 2016.
\newblock \doi{10.1103/PhysRevA.93.022314}.
\newblock URL \url{https://link.aps.org/doi/10.1103/PhysRevA.93.022314}.

\bibitem[Dodin and Startsev(2021)]{dodin2021applications}
I.~Y. Dodin and E.~A. Startsev.
\newblock On applications of quantum computing to plasma simulations.
\newblock \emph{arXiv:2005.14369}, 2021.
\newblock \doi{10.48550/arXiv.2005.14369}.

\bibitem[{Feynman}(1982)]{1982IJTP...21..467F}
Richard~P. {Feynman}.
\newblock {Simulating Physics with Computers}.
\newblock \emph{International Journal of Theoretical Physics}, 21\penalty0
  (6-7):\penalty0 467--488, 1982.
\newblock \doi{10.1007/BF02650179}.

\bibitem[Forets and Pouly(2017)]{https://doi.org/10.48550/arxiv.1711.02552}
Marcelo Forets and Amaury Pouly.
\newblock Explicit error bounds for {C}arleman linearization.
\newblock \emph{arXiv preprint arXiv:1711.02552}, 2017.
\newblock \doi{10.48550/arXiv.1711.02552}.

\bibitem[Gross(1961)]{1961NCim...20..454G}
Eugene~P Gross.
\newblock Structure of a quantized vortex in boson systems.
\newblock \emph{Il Nuovo Cimento (1955-1965)}, 20\penalty0 (3):\penalty0
  454--477, 1961.
\newblock \doi{10.1007/BF02731494}.

\bibitem[Hammond et~al.(1994)Hammond, Lester, and Reynolds]{doi:10.1142/1170}
B~L Hammond, W~A Lester, and P~J Reynolds.
\newblock \emph{Monte Carlo Methods in Ab Initio Quantum Chemistry}.
\newblock World Scientific, 1994.
\newblock \doi{10.1142/1170}.
\newblock URL \url{https://www.worldscientific.com/doi/abs/10.1142/1170}.

\bibitem[Harrow et~al.(2009)Harrow, Hassidim, and Lloyd]{Harrow_2009}
Aram~W. Harrow, Avinatan Hassidim, and Seth Lloyd.
\newblock Quantum algorithm for linear systems of equations.
\newblock \emph{Physical Review Letters}, 103\penalty0 (15), 2009.
\newblock \doi{10.1103/physrevlett.103.150502}.
\newblock URL \url{https://doi.org/10.1103%2Fphysrevlett.103.150502}.

\bibitem[Hildebrand(1987)]{hildebrand1987introduction}
Francis~Begnaud Hildebrand.
\newblock \emph{Introduction to numerical analysis}.
\newblock Courier Corporation, 1987.

\bibitem[Joseph(2020)]{Joseph_2020}
Ilon Joseph.
\newblock Koopman{\textendash}von neumann approach to quantum simulation of
  nonlinear classical dynamics.
\newblock \emph{Physical Review Research}, 2\penalty0 (4), 2020.
\newblock \doi{10.1103/physrevresearch.2.043102}.
\newblock URL \url{https://doi.org/10.1103%2Fphysrevresearch.2.043102}.

\bibitem[Ju~L~Daleckii(1974)]{banach}
M~G~Krein Ju~L~Daleckii.
\newblock \emph{Stability of solutions of differential equations in Banach
  space}.
\newblock American Mathematical Society, 1974.

\bibitem[Krovi(2023{\natexlab{a}})]{Krovi_2023}
Hari Krovi.
\newblock Improved quantum algorithms for linear and nonlinear differential
  equations.
\newblock \emph{Quantum}, 7:\penalty0 913, February 2023{\natexlab{a}}.
\newblock ISSN 2521-327X.
\newblock \doi{10.22331/q-2023-02-02-913}.
\newblock URL \url{http://dx.doi.org/10.22331/q-2023-02-02-913}.

\bibitem[Krovi(2023{\natexlab{b}})]{https://doi.org/10.48550/arxiv.2202.01054}
Hari Krovi.
\newblock Improved quantum algorithms for linear and nonlinear differential
  equations.
\newblock \emph{Quantum}, 7:\penalty0 913, 2023{\natexlab{b}}.
\newblock \doi{10.48550/arxiv.2202.01054}.

\bibitem[Leyton and Osborne(2008)]{https://doi.org/10.48550/arxiv.0812.4423}
Sarah~K Leyton and Tobias~J Osborne.
\newblock A quantum algorithm to solve nonlinear differential equations.
\newblock \emph{arXiv preprint arXiv:0812.4423}, 2008.
\newblock \doi{10.48550/arXiv.0812.4423}.

\bibitem[Liu et~al.(2021)Liu, Kolden, Krovi, Loureiro, Trivisa, and
  Childs]{Liu_2021}
Jin-Peng Liu, Herman~{\O}ie Kolden, Hari~K Krovi, Nuno~F Loureiro, Konstantina
  Trivisa, and Andrew~M Childs.
\newblock Efficient quantum algorithm for dissipative nonlinear differential
  equations.
\newblock \emph{Proceedings of the National Academy of Sciences}, 118\penalty0
  (35):\penalty0 e2026805118, 2021.
\newblock \doi{10.1073/pnas.2026805118}.

\bibitem[{Lloyd} et~al.(2020){Lloyd}, {De Palma}, {Gokler}, {Kiani}, {Liu},
  {Marvian}, {Tennie}, and {Palmer}]{2020arXiv201106571L}
Seth {Lloyd}, Giacomo {De Palma}, Can {Gokler}, Bobak {Kiani}, Zi-Wen {Liu},
  Milad {Marvian}, Felix {Tennie}, and Tim {Palmer}.
\newblock {Quantum algorithm for nonlinear differential equations}.
\newblock \emph{arXiv e-prints}, art. arXiv:2011.06571, November 2020.
\newblock \doi{10.48550/arXiv.2011.06571}.

\bibitem[Low and Chuang(2017)]{PhysRevLett.118.010501}
Guang~Hao Low and Isaac~L. Chuang.
\newblock Optimal hamiltonian simulation by quantum signal processing.
\newblock \emph{Phys. Rev. Lett.}, 118:\penalty0 010501, 2017.
\newblock \doi{10.1103/PhysRevLett.118.010501}.
\newblock URL \url{https://link.aps.org/doi/10.1103/PhysRevLett.118.010501}.

\bibitem[Low and Chuang(2019)]{Low2019hamiltonian}
Guang~Hao Low and Isaac~L. Chuang.
\newblock Hamiltonian {S}imulation by {Q}ubitization.
\newblock \emph{{Quantum}}, 3:\penalty0 163, 2019.
\newblock ISSN 2521-327X.
\newblock \doi{10.22331/q-2019-07-12-163}.
\newblock URL \url{https://doi.org/10.22331/q-2019-07-12-163}.

\bibitem[Mozyrska and Bartosiewicz(2008)]{carl}
Dorota Mozyrska and Zbigniew Bartosiewicz.
\newblock On carleman linearization of linearly observable polynomial systems.
\newblock \emph{Mathematical Control Theory and Finance}, 01 2008.
\newblock \doi{10.1007/978-3-540-69532-5_17}.

\bibitem[Nielsen and Chuang(2001)]{doi:10.1119/1.1463744}
Michael~A Nielsen and Isaac~L Chuang.
\newblock \emph{Quantum computation and quantum information}, volume~2.
\newblock Cambridge university press Cambridge, 2001.
\newblock \doi{10.1017/CBO9780511976667}.

\bibitem[Tanaka and Fujii(2025)]{tanaka2025polynomialtimequantumalgorithm}
Yu~Tanaka and Keisuke Fujii.
\newblock A polynomial time quantum algorithm for exponentially large scale
  nonlinear differential equations via hamiltonian simulation, 2025.
\newblock URL \url{https://arxiv.org/abs/2305.00653}.

\bibitem[Wu et~al.(2024)Wu, Wang, and
  Li]{wu2024quantumalgorithmsnonlineardynamics}
Hsuan-Cheng Wu, Jingyao Wang, and Xiantao Li.
\newblock Quantum algorithms for nonlinear dynamics: Revisiting carleman
  linearization with no dissipative conditions.
\newblock \emph{arXiv preprint arXiv:2405.12714}, 2024.
\newblock \doi{10.48550/arXiv.2405.12714}.

\bibitem[Xin et~al.(2020)Xin, Wei, Cui, Xiao, Arrazola, Lamata, Kong, Lu,
  Solano, and Long]{Xin}
Tao Xin, Shijie Wei, Jianlian Cui, Junxiang Xiao, I\~nigo Arrazola, Lucas
  Lamata, Xiangyu Kong, Dawei Lu, Enrique Solano, and Guilu Long.
\newblock Quantum algorithm for solving linear differential equations: Theory
  and experiment.
\newblock \emph{Phys. Rev. A}, 101:\penalty0 032307, Mar 2020.
\newblock \doi{10.1103/PhysRevA.101.032307}.
\newblock URL \url{https://link.aps.org/doi/10.1103/PhysRevA.101.032307}.

\bibitem[Zhdaneev et~al.(1999)Zhdaneev, Serezhnikov, and Trifonov]{nonlinschro}
O.V. Zhdaneev, G.N. Serezhnikov, and A.Y. et~al Trifonov.
\newblock Semiclassical trajectory-coherent states of the nonlinear
  schrödinger equation with unitary nonlinearity.
\newblock \emph{Russ Phys J}, page 598–606, 1999.
\newblock \doi{10.1007/BF02513223}.

\end{thebibliography}
\end{document}